\documentclass[10pt,a4paper]{article}

\usepackage[utf8]{inputenc}
\usepackage[T1]{fontenc}
\usepackage{amsmath,amssymb,amsfonts}
\usepackage{booktabs}
\usepackage{hyperref}
\usepackage{geometry}
\usepackage{enumitem}
\usepackage{xcolor}
\usepackage{graphicx}
\newcommand{\rev}[1]{{#1}}

\newcommand{\revthree}[1]{{#1}}

\newcommand{\revfive}[1]{{#1}}
\newcommand{\revsix}[1]{{#1}}
\newcommand{\revseven}[1]{{#1}}
\newcommand{\reveight}[1]{{#1}}

\newcommand{\revten}[1]{{#1}}

\usepackage{makecell}
\usepackage{multirow}
\usepackage{longtable}
\usepackage{placeins}
\usepackage{tabularx}
\usepackage{float}
\usepackage{amsthm}
\newtheorem{proposition}{Proposition}
\theoremstyle{exam}
\newtheorem{exam}{Example}
\theoremstyle{remark}

\theoremstyle{plain}
\usepackage{algorithm}
\usepackage{algpseudocode}
\algrenewcommand\algorithmicrequire{\textbf{Input:}}
\algrenewcommand\algorithmicensure{\textbf{Output:}}
\title{\textbf{RIDGE: An Autonomous Framework for Validation and Method Discovery in LLM-Generated Option Pricing}}
\author{Liexin Cheng$^{1,2}$ \quad Xue Cheng$^{2}$ \quad Shuaiqiang Liu$^{3}$\thanks{Email: \href{mailto:shuaiqiangliu@tudelft.nl}{shuaiqiangliu@tudelft.nl}} \quad Cornelis W.\ Oosterlee$^{1}$ \\[4pt]
	{\small $^{1}$Mathematical Institute, Utrecht University \quad $^{2}$School of Mathematical Sciences, Peking University}\\
	{\small $^{3}$Delft Institute of Applied Mathematics, Delft University of Technology}}
\date{}

\begin{document}
	\maketitle
	
\begin{abstract}
Automated code generation is becoming an important tool in quantitative finance, where large language models can generate option pricing implementations directly from mathematical model specifications. Validating such implementations, however, requires considerably more than conventional software testing: numerical pricing methods must remain mathematically consistent, numerically stable, and reliable across a wide range of model parameters.

We introduce RIDGE, an autonomous validation framework in which generated pricing implementations are subjected to structured no-arbitrage tests, stress tests, benchmark comparisons, and consistency checks. Validation evidence is interpreted diagnostically, while the resulting knowledge is accumulated in a repository and reused across models and successive validation iterations. This enables systematic refinement of both the pricing implementation and the validation methodology.

The framework is applied to five stochastic volatility models. Across these studies, all detected implementation defects are removed and, in two cases, the validation process reveals methodological limitations and motivates the development of alternative numerical methods. The supplementary material is available in the GitHub repository: \url{https://github.com/ShQiangLiu/ridge}.
\end{abstract}
	
	\section{Introduction}
	
Artificial intelligence has recently emerged as a useful tool for exploring program and algorithmic search spaces. AlphaCode~\cite{Li2022AlphaCode}, for example, demonstrated competition-level program synthesis through large-scale generation and filtering, while AlphaTensor~\cite{Fawzi2022AlphaTensor} and AlphaDev~\cite{Mankowitz2023AlphaDev} showed that learning-based agents can discover improved algorithms; see also~\cite{LeGoues2012GenProg,Breck2017MLTestScore,Udrescu2020AIFeynman}. These developments demonstrate the power of AI-driven generation under executable objectives. 
However, they remain largely empirical: they demonstrate that useful programs can be generated or discovered, but they do not explain when the generation process can be stably guided toward  valid  scientific-computing implementations.

This gap becomes particularly important for software generated by large language models (LLMs), which are increasingly used to generate scientific software directly from mathematical specifications~\cite{Lu2026,seo2025papercode,Khattab2024DSPy}.  In quantitative finance, they can generate implementations of option pricing methodologies with limited manual coding effort and can efficiently explore alternative modeling assumptions and numerical techniques.   This development raises a fundamental question: how can one determine whether a generated implementation is mathematically correct, numerically stable, and reliable beyond a small collection of benchmark examples?
	Standard software validation techniques~\cite{OberkampfRoy2010} provide limited assurance in this setting. A pricing implementation may produce acceptable values for standard test cases, yet fail in more demanding parameter regimes. Such failures may arise from unsuitable numerical parameters or from implementation errors. In some cases, important adaptations may simply be omitted. Furthermore, numerical pricing methods inevitably introduce approximation errors~\cite{Higham2002} and require truncation and discretization steps, whose combined effect is difficult to assess through unit tests alone. Validation therefore requires more than verifying that a small collection of numerical examples produces seemingly reasonable outputs.

	To address this problem, we introduce RIDGE ({R}epository-driven {I}terative {D}iagnosis, {G}eneration and {E}valuation), an autonomous validation framework for generated pricing implementations. RIDGE combines deterministic numerical measurements with repository-driven diagnosis and iterative implementation refinement. The framework combines consistency checks, stress testing, and benchmark comparisons, while the diagnostic knowledge accumulated during validation is retained and reused across validation iterations and model classes. As a consequence, both the pricing implementations and the validation methodology evolve as additional models are examined.
	The framework is developed and tested in the setting of stochastic-volatility option pricing. These models span a broad range of numerical complexity and  provide a natural testbed for studying autonomous validation. We consider five representative stochastic-volatility models: Heston~\cite{heston1993}, Bates~\cite{bates1996}, rough Heston~\cite{el2019rough}, Heston--Hull--White~\cite{grzelak2011,HaentjensInTHout2012}, and Generalized Stochastic Volatility Jump-Diffusion (G-SVJD)~\cite{fusari2025}.
	Option pricing is a central computational task in quantitative finance and serves as a cornerstone of modern derivatives practice. Some stochastic-volatility models admit highly efficient transform-based valuation methods, whereas others require approximations or Monte Carlo simulation. This diversity of numerical methodologies makes stochastic-volatility option pricing a natural testbed for autonomous validation.

	An outcome of the present study is that validation can, in certain cases, lead to alternative numerical techniques. In two cases, the Heston--Hull--White and G-SVJD models, the validation process indicates structural limitations of the initially generated Monte Carlo methodologies and leads instead to alternative semi-analytic valuation approaches based on conditional Gaussian representations and the Feynman--Kac theorem. The framework therefore serves both to validate generated implementations and to identify and develop better-suited numerical methodologies.

RIDGE formalizes domain-specific validation as the central mechanism for the reliable development of AI-generated option-pricing implementations. It combines iterative code generation and refinement with deterministic validation operators based on financial admissibility, benchmark hierarchies, degeneration limits, stress testing, and repository-driven knowledge transfer. While generic coding agents provide mechanisms for iterative code generation and revision, RIDGE contributes the finance-specific validation framework that guides this process and evaluates whether an implementation realizes the intended numerical operator and satisfies the mathematical principles underlying option pricing. The validation-driven framework further enables systematic computational method discovery when implementation refinement alone is no longer sufficient.

	The contributions of this paper are fourfold. First, we introduce RIDGE, an operator-based framework for validating generated option pricing implementations through deterministic measurement, diagnosis, and successive refinement. Second, we develop the notion of an evolving validation repository that accumulates expertise across validation iterations and transfers knowledge between model classes. Third, we demonstrate the framework on five stochastic-volatility models and show how implementation defects and inefficiencies can be detected and removed through successive validation iterations. Finally, we show that validation can drive methodological development, leading in certain cases to alternative pricing methodologies.
	
	The remainder of this paper is organized as follows. Section~\ref{sec:background} introduces the option-pricing setting and the operator formulation of RIDGE. Section~\ref{sec:procedure} presents the validation framework, including its measurement design, diagnostics, and repository refinements. Section~\ref{sec:experiments} reports the experimental studies, comprising both the replication experiments and the methodology-development cases, together with an evaluation of the resulting pricing methods. Section~\ref{sec:ablation} presents the ablation and reproducibility experiments that isolate the contribution of the validation operator. Section~\ref{sec:discussion} concludes with a discussion of the main lessons across models and directions for future work.

	\section{Validation of option pricing implementations}
	\label{sec:background}
	
	To study the proposed validation framework, we require a collection of asset price models and suitable option pricing methods that display a broad range of numerical behaviour. Stochastic-volatility models and their associated pricing techniques form a natural test bed for this purpose.

	The models considered here range from settings with analytic characteristic functions to models requiring substantially more involved numerical techniques. They  provide a natural environment in which generated implementations can be validated, refined, and, in some cases, replaced by alternative methodologies.
	We first introduce the stochastic-volatility models that provide the application domain for RIDGE and then present the framework itself, including its information flow, operator formulation, and the role of the language model in the iterative validation process.
	
	\subsection{Option valuation under stochastic volatility}
	
	The validation framework is developed and evaluated on a family of stochastic-volatility models of increasing complexity. These models are widely used in quantitative finance and provide a rich collection of numerical challenges, ranging from affine models with known characteristic functions to non-affine models for which valuation requires approximations or Monte Carlo simulation.
	
	Under the risk-neutral measure $\mathbb{Q}$, the asset price $S_t$ typically follows
	\begin{equation*}
		\frac{\mathrm{d}S_t}{S_t} = (r_t - q)\,\mathrm{d}t + \sqrt{v_t}\,\mathrm{d}W_t,
	\end{equation*}
	where $r_t$ denotes the short rate, $q$ the dividend yield, and $v_t$ a stochastic variance process. Depending on the model, the short rate may be constant or stochastic. In more general settings, jumps may be included~\cite{merton1976}:
	\begin{equation*}
		\frac{\mathrm{d}S_t}{S_t} = (r_t - q - \lambda m_J)\,\mathrm{d}t + \sqrt{v_t}\,\mathrm{d}W_t + (J_t-1)\,\mathrm{d}N_t,
	\end{equation*}
	where $N_t$ is a Poisson process with intensity $\lambda$, $J_t$ denotes the jump size, and $m_J$ is the jump compensator.
	
	The stochastic variance process may take different forms. In the Heston stochastic-volatility model~\cite{heston1993}, the variance satisfies
	\begin{equation*}
		\mathrm{d}v_t = \kappa(\theta - v_t)\,\mathrm{d}t + \sigma_v\sqrt{v_t}\,\mathrm{d}Z_t,\qquad
		\mathrm{d}[W,Z]_t = \rho\,\mathrm{d}t.
	\end{equation*}
	Here $v_t$ follows a Cox--Ingersoll--Ross square-root process~\cite{cox1985}. The Heston model belongs to the class of affine stochastic-volatility models, for which the logarithm of the asset price admits a characteristic-function representation. European option prices can then be computed efficiently by Fourier-based methods, like the Fourier-cosine (COS) method~\cite{fang2008}.
	
Many stochastic-volatility models considered in current research fall outside this affine setting. The five models used in this study span several such extensions of the Heston framework. The Bates model adds jumps while retaining an analytic characteristic function, whereas rough Heston replaces the Markovian variance dynamics by a Volterra process and requires the numerical solution of a fractional Riccati equation. The Heston--Hull--White model introduces stochastic interest rates, and a nonzero correlation between the equity and short-rate processes introduces a square-root dependence on the variance state, preventing the classical exponential--affine transform from closing under the standard
Riccati system. G-SVJD further replaces the square-root variance diffusion by a more general non-affine power-law specification. The Heston/Bates variance dynamics are recovered as a special case, whereas the more general specification is non-affine. These extensions lead to substantially different numerical valuation problems.

We defer the detailed model specifications and pricing methodologies to the corresponding validation studies in Section~\ref{sec:experiments}: Heston, Bates, and rough Heston are considered in Section~\ref{sec:rep_models}, while the Heston--Hull--White and G-SVJD models are treated in Section~\ref{sec:meth_dev}. For the latter two models, the validation process leads to alternative valuation methodologies. The purpose of the present section is to establish the common option-pricing setting in which the RIDGE framework is formulated.
	
	\subsection{Quantitative validation operator}\label{ssec:framework}
	
	RIDGE is an iterative framework in which pricing implementations evolve through successive iterations of measurement, diagnosis, and revision, while the expertise accumulated during validation is retained and reused. We first describe the information flow of a single RIDGE validation iteration and then formalize it in an operator-theoretic setting.
	
	Each iteration begins with three objects: a blueprint, a code-level pricing implementation, and a repository containing the accumulated validation knowledge. The blueprint outlines the model configurations and the implementation method for option pricing.  The repository documents methodologies and implementation details of the validation procedure,  plus model- and method-specific insights from previous validation experiences that support the interpretation of the resulting evidence.
	The implementation is then subjected to a collection of numerical measurements, producing an evidence record that is interpreted diagnostically. The resulting findings are used to revise the implementation and, if necessary, the blueprint itself, while the repository is updated with the expertise accumulated during the iteration. The validation process either terminates or continues from this revised state in the next iteration.
	Figure~\ref{fig:ridge-information-flow} summarizes this information flow.
	
	\begin{figure}[htb]
		\centering
		\includegraphics[width=0.99\textwidth]{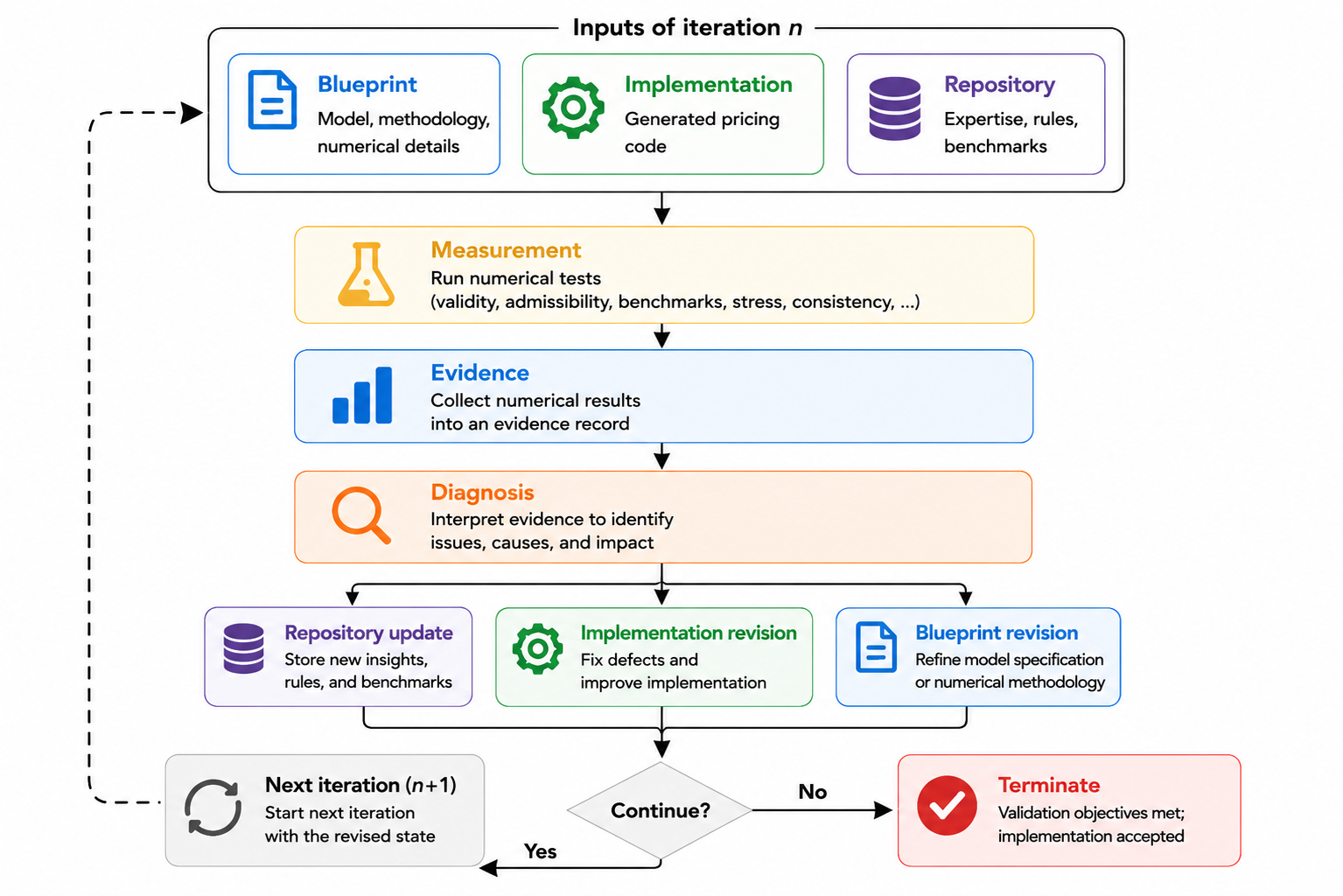}
		\caption{
			Information flow of a RIDGE validation iteration.
			A blueprint, pricing implementation, and repository define the current iteration.
			Measurement produces an evidence record, diagnosis interprets this evidence and updates both the repository and, when necessary, the blueprint and implementation.
			The revised state either initiates the next iteration or terminates the validation process.
		}
		\label{fig:ridge-information-flow}
	\end{figure}
	
	We now cast the validation framework RIDGE in an operator-theoretic setting. Let $\mathcal{M}$ denote the space of stochastic-volatility model specifications and let $\mathcal{G}$ denote the strike--maturity domain to be priced. The exact pricing operator is
	$$
	\mathcal{P}:\mathcal{M}\rightarrow\mathcal{Y},
	$$
	where $\mathcal{Y}$ is the space of admissible pricing surfaces over $\mathcal{G}$; each model specification $M\in\mathcal{M}$ is associated with its theoretical pricing surface $\mathcal{P}(M)$.
	
	A numerical method is characterized by a collection of numerical controls, denoted by $h$, which specify the truncation domain and the discretization parameters so as to balance accuracy against computational cost. These controls induce a finite representation of the pricing problem: a strike--maturity grid $\mathcal{G}_h\subset\mathcal{G}$, on which prices form vectors in a finite-dimensional space $\mathcal{Y}_h$, and a finite family of model specifications $\mathcal{M}_h\subseteq\mathcal{M}$ on which the implementation is evaluated.
	On this structure, the target pricing operator
	\[
	\mathcal{P}_h:\mathcal{M}_h\rightarrow\mathcal{Y}_h
	\]
	represents the ideal numerical realization of $\mathcal{P}$ under the controls $h$, whereas
	\[
	\widehat{\mathcal{P}}_h:\mathcal{M}_h\rightarrow\mathcal{Y}_h
	\]
	denotes the generated pricing implementation intended to approximate $\mathcal{P}_h$ (see Table~\ref{tab:operators}). For each model specification $M\in\mathcal{M}_h$, the vectors $\mathcal{P}_h(M)$ and $\widehat{\mathcal{P}}_h(M)$ contain the prices on the grid $\mathcal{G}_h$. All distances introduced below are evaluated over $\mathcal{M}_h$.
	
	\begin{table}[H]
		\centering
		\caption{Definitions of the pricing operators.}
		\label{tab:operators}
		\small
		\begin{tabular}{cl}
			\toprule
			\textbf{Operator} & \textbf{Meaning} \\
			\midrule
			$\mathcal{P}$ & exact pricing operator: the theoretical price profile of a model \\
			$\mathcal{P}_h$ & target pricing operator: a numerical realization of $\mathcal{P}$ with controls $h$ \\
			$\widehat{\mathcal{P}}_h$ & pricing implementation: the generated realization intended to approximate $\mathcal{P}_h$ \\
			$\mathcal{P}_h^{\mathrm{ref}}$ & reference value of $\mathcal{P}_h$, known on a subset of models \\
			\bottomrule
		\end{tabular}
	\end{table}
	
	Given two pricing operators $\mathcal{P}$ and $\mathcal{Q}$ and a model $M\in\mathcal{M}_h$, the \emph{model-wise distance} is
	\begin{equation*}
		d_M(\mathcal{P}, \mathcal{Q})
		\;:=\;
		\bigl\|\mathcal{P}(M)-\mathcal{Q}(M)\bigr\|_\infty,
	\end{equation*}
	the largest absolute price difference over the grid $\mathcal{G}_h$.
	Aggregated over the model family, we define the norm
	\[
	\|\mathcal{P}-\mathcal{Q}\|_{\mathcal{M}_h}
	:=
	\sup_{M\in\mathcal{M}_h}
	d_M(\mathcal{P},\mathcal{Q}).
	\]
	
	The total error of the implementation relative to the exact operator decomposes as
	\begin{equation}
		\label{eq:error_decomp}
		\|\widehat{\mathcal{P}}_h-\mathcal{P}\|_{\mathcal{M}_h}
		\;\le\;
		\underbrace{\|\widehat{\mathcal{P}}_h-\mathcal{P}_h\|_{\mathcal{M}_h}}_{\text{implementation error}}
		+
		\underbrace{\|\mathcal{P}_h-\mathcal{P}\|_{\mathcal{M}_h}}_{\text{scheme error}}.
	\end{equation}
	The first term is the object of validation: the aim is to reduce the implementation error so that the implementation correctly realizes the target operator $\mathcal{P}_h$. The second term reflects the approximation error of the numerical scheme and is treated separately.
	
The repository introduced above is denoted by $\mathcal{R}$. Its accumulated validation expertise is consulted by both the pricing implementation operator and the validation operator; dependence on the repository is indicated by the subscript $\mathcal{R}$.

The pricing implementation operator $\Phi_{\mathcal{R}}$ advances the iteration. It takes as input the blueprint $\mathcal{B}$, a structured specification of how the model is to be priced, comprising the stochastic model and its assumptions, the numerical methodology, and the implementation details.
	 The first iteration has no diagnosis to draw on and  starts from the source material, usually an article containing the model specification and a detailed methodology. This information is represented by the initial blueprint $\mathcal{B}^{(0)}$; in the simplest case, $\mathcal{B}^{(0)}$ contains the model specification and its assumptions. Each subsequent iteration then reads the current blueprint $\mathcal{B}^{(n)}$ together with the diagnostic report $\mathcal{D}^{(n)}$, which contains the findings of the previous validation iteration, namely the detected defects, their likely causes, and suggested revisions, and returns a revised blueprint and implementation,
	\[
	\bigl(\mathcal{B}^{(n+1)},\,\widehat{\mathcal{P}}_h^{(n+1)}\bigr)
	=
	\Phi_{\mathcal{R}}
	\!\left(
	\mathcal{B}^{(n)},
	\mathcal{D}^{(n)}
	\right),
	\qquad
	\mathcal{D}^{(0)}=\emptyset .
	\]
	
	The resulting implementation $\widehat{\mathcal{P}}_h^{(n+1)}$ need not be restricted to a single numerical method: $\Phi_{\mathcal{R}}$ may combine several numerical or simulation schemes within one implementation to cover the parameter and strike--maturity domains.
	
	The validation operator $\mathcal{V}_{\mathcal{R}}$ then evaluates $\widehat{\mathcal{P}}_h^{(n+1)}$ on the validation domain $\mathcal{M}_h$. It decomposes into measurement and diagnosis,
	\[
	\mathcal{V}_{\mathcal{R}}
	=
	\Gamma_{\mathcal{R}}
	\circ
	\mu_{\mathcal{R}}.
	\]
	
	The measurement operator $\mu_{\mathcal{R}}$ executes the relations prescribed by $\mathcal{R}$ on the implementation. It is organized into three blocks: validity, admissibility, and benchmarking, and returns a structured evidence record. The diagnosis operator $\Gamma_{\mathcal{R}}$ interprets this evidence and produces the next diagnostic report $\mathcal{D}^{(n+1)}$, a stop signal, and an updated repository for the next iteration.

	\noindent\textbf{Block~I (validity).}
	The implementation $\widehat{\mathcal{P}}_h$ emitted by $\Phi_\mathcal{R}$ must define a valid pricing operator. In particular, it must produce finite price vectors on $\mathcal{M}_h$, and the source code must correctly implement the methodology described in $\mathcal{B}$. A violation in this block terminates the iteration; the implementation is revised before any subsequent test is performed.
	
\noindent\textbf{Block~II (admissibility).}
For each model \(M\), let \(\mathcal A_M\subseteq\mathcal Y_h\) be the arbitrage-free set of price vectors on the grid \(\mathcal G_h\). For each \(M\in\mathcal M_h\), the admissibility error is
\[
A_M(\widehat{\mathcal P}_h)
:=
\inf_{U\in\mathcal A_M}
\bigl\|
\widehat{\mathcal P}_h(M)-U
\bigr\|_\infty .
\]
This quantity vanishes exactly when the price vector \(\widehat{\mathcal P}_h(M)\) is arbitrage-free on the validation grid. In practice, the admissibility of $\widehat{\mathcal P}_h(M)$ is assessed through the residuals of the parity, monotonicity, convexity, and density checks described in Section~\ref{sec:iteration}. The resulting residuals provide the operational admissibility measurements for each tested configuration and are used to assess deviation from $\mathcal A_M$.

Computational cost is measured separately as
$
\tau_M(\widehat{\mathcal P}_h),
$
and is used by the diagnosis stage to identify inefficient implementations and to prioritize numerical refinements. It is not part of the admissibility error.
	
	\noindent\textbf{Block~III (benchmarking).}
	The repository specifies a reference value $\mathcal{P}_h^{\mathrm{ref}}$ of the target operator $\mathcal{P}_h$, established on a subset $\mathcal{M}_{\mathrm{ref}}\subseteq\mathcal{M}_h$ of models. The reference may originate from benchmark prices documented in the raw material or from a validated and sound implementation. The agreement with this reference is measured, for each model $M\in\mathcal{M}_{\mathrm{ref}}$, by the concordance error
	\begin{equation}
		\label{eq:C_res}
		C_M(\widehat{\mathcal{P}}_h)
		:=
		d_M(\widehat{\mathcal{P}}_h,\mathcal{P}_h^{\mathrm{ref}}).
	\end{equation}
	Benchmarking is performed against the target operator rather than the exact one: $\mathcal{P}_h^{\mathrm{ref}}$ represents a trusted realization of the target discretization $\mathcal{P}_h$ on the configurations in $\mathcal{M}_{\mathrm{ref}}$. This provides the validation procedure with a concrete and attainable target against which implementations can be improved.

The measurement operator $\mu_\mathcal{R}: \widehat{\mathcal{P}}_h \mapsto \mathcal{E}$ gathers the outcomes of the three blocks into the structured evidence record
\begin{equation*}
	\mathcal{E}
	=
	\bigl(
	\mathcal{E}_{\mathrm{I}},
	\mathcal{E}_{\mathrm{II}},
	\mathcal{E}_{\mathrm{III}}
	\bigr),
	\qquad
	\mathcal{E}_{\mathrm{II}}
	=
	\bigl\{
	A_M(\widehat{\mathcal{P}}_h),
	\tau_M(\widehat{\mathcal{P}}_h)
	\bigr\}_{M\in\mathcal{M}_h},
	\qquad
	\mathcal{E}_{\mathrm{III}}
	=
	\bigl\{
	C_M(\widehat{\mathcal{P}}_h)
	\bigr\}_{M\in\mathcal{M}_{\mathrm{ref}}}.
\end{equation*}
Here, $\mathcal{E}_{\mathrm{I}}$ records the pass/fail outcome of the validity inspections in Block~I. The component $\mathcal{E}_{\mathrm{II}}$ collects the per-model admissibility measurements together with the corresponding computational costs, while $\mathcal{E}_{\mathrm{III}}$ collects the concordance measurements defined in~(\ref{eq:C_res}).

The \emph{diagnosis operator} $\Gamma_\mathcal{R}: \mathcal{E} \mapsto (\mathcal{D},S,\mathcal{R}')$ interprets the evidence and returns a diagnostic report $\mathcal{D}$, a stop signal $S$, and an updated repository $\mathcal{R}'$ for the next iteration. For each identified concern, whether related to the pricing implementation or the validation procedure, $\mathcal{D}$ records the finding, its diagnostic classification, and a suggested revision. Residuals that correspond to a pre-registered abnormal regime or to a documented limitation that persists after reasonable revision attempts may be classified as \texttt{[EXPECTED]}. Such residuals remain in the evidence record and are reported, but are no longer treated as actionable revision targets.

Accordingly, let
\[
A_M^{\mathrm{act}}(\widehat{\mathcal{P}}_h)
\qquad\text{and}\qquad
C_M^{\mathrm{act}}(\widehat{\mathcal{P}}_h)
\]
denote, respectively, the largest admissibility and concordance residuals for model $M$ that are not classified as \texttt{[EXPECTED]} by $\Gamma_\mathcal{R}$. The stop signal is set to \textit{halt} once all actionable admissibility and concordance errors fall below the prescribed tolerances $\eta_A,\eta_C>0$,
\begin{equation*}
S=
\begin{cases}
\textit{halt}, &
\displaystyle
\sup_{M\in\mathcal{M}_h}
A_M^{\mathrm{act}}(\widehat{\mathcal{P}}_h)
\le \eta_A
\quad\text{and}\quad
\sup_{M\in\mathcal{M}_{\mathrm{ref}}}
C_M^{\mathrm{act}}(\widehat{\mathcal{P}}_h)
\le \eta_C,
\\[3pt]
\textit{continue}, & \text{otherwise}.
\end{cases}
\end{equation*}
Thus, termination does not require every measured residual to vanish or fall below tolerance; it requires that no discrepancy above tolerance remains an actionable revision target.

The resulting iteration is
\begin{equation*}
	\bigl(\mathcal{B}^{(n)},\,\mathcal{D}^{(n)}\bigr)
	\;\xrightarrow{\,\Phi_{\mathcal{R}^{(n)}}\,}\;
	\bigl(\mathcal{B}^{(n+1)},\,\widehat{\mathcal{P}}_h^{(n+1)}\bigr)
	\;\xrightarrow{\,\mu_{\mathcal{R}^{(n)}}\,}\;
	\mathcal{E}^{(n+1)}
	\;\xrightarrow{\,\Gamma_{\mathcal{R}^{(n)}}\,}\;
	\bigl(\mathcal{D}^{(n+1)},\,S^{(n+1)},\,\mathcal{R}^{(n+1)}\bigr),
\end{equation*}
halting at the first iteration for which $S=\textit{halt}$. Through this loop the implementation $\widehat{\mathcal{P}}_h$ is guided toward the target operator $\mathcal{P}_h$, with the aim of reducing the implementation-error term in decomposition~(\ref{eq:error_decomp}) until no actionable discrepancy remains above tolerance.

The repository $\mathcal{R}$ is not fixed: each diagnosis returns an updated $\mathcal{R}'$, so $\mathcal{R}$ evolves from iteration to iteration and across the models it has seen. Unlike a standard LLM coding agent, which evolves source code, RIDGE evolves validated domain knowledge that subsequently guides implementation refinement. Its evolution has two aspects: the accumulation of model- and method-specific expertise and the refinement of the measurement and diagnosis procedures themselves in response to validation feedback. Because structural relations are accumulated across model classes, insights gained during the validation of one model can refine the strategy applied to subsequent ones. Beyond refining a given method, this accumulation may also reveal the need for a better-suited one, as for the HHW and G-SVJD models of Section~\ref{sec:meth_dev}.
	
	\subsection{Stochastic dynamics of validation}\label{ssec:stochastic}

	The preceding operator formulation describes a deterministic measurement layer coupled to a stochastic implementation generator. We now give a conditional convergence interpretation of this interaction. The purpose is not to assert convergence for an arbitrary language model, but to identify sufficient conditions under which repository-driven diagnosis produces a validation trajectory approaching the stopping region defined above.

	Let $(\Omega,\mathcal{F},\mathbb{P})$ be a probability space that describes the randomness in the language-model generation and diagnosis stages, and let $\{\mathcal{F}_n\}_{n\geq0}$ denote the associated filtration, with $\mathcal{F}_n$ generated by the blueprints, implementations, repositories, evidence, and diagnostic reports observed up to iteration $n$. The iteration state is
	\[
	X_n
	=
	\bigl(
	\mathcal{B}^{(n)},
	\widehat{\mathcal{P}}_h^{(n)},
	\mathcal{R}^{(n)}
	\bigr).
	\]
Conditional on $\mathcal{F}_n$, the generation operator proposes the next blueprint and implementation through the stochastic language-model generation step. The resulting implementation is then evaluated according to the measurement protocol prescribed by $R^{(n)}$. Thus, randomness enters only through the proposed revision. Once the generated implementation and the corresponding validation protocol have been fixed, the Block~II and Block~III measurements are deterministic.

For iterations that pass the Block~I validity inspection, define the tolerance-adjusted validation energy
\begin{equation}
L_n :=
\left[
\sup_{M\in\mathcal M_h}
A_M^{\mathrm{act}}
\!\left(\widehat{\mathcal P}^{(n)}_h\right)
-\eta_A
\right]_+^2
+
\left[
\sup_{M\in\mathcal M_{\mathrm{ref}}}
C_M^{\mathrm{act}}
\!\left(\widehat{\mathcal P}^{(n)}_h\right)
-\eta_C
\right]_+^2,
\label{eq:validation_energy}
\end{equation}
where $[x]_+ := \max\{x,0\}$. A Block~I failure is handled by immediate diagnosis and revision, as specified above, and is excluded from this quantitative energy. By construction, $L_n=0$ precisely when the actionable admissibility and concordance conditions defining $S=\textit{halt}$ are satisfied. The zero set of $L_n$ is consequently the validation-defined acceptance region. 

Updates to the repository may temporarily increase the measured validation energy of an implementation that previously satisfied the current validation criteria. Another source of disturbance in performance comes from the stochastic generation of a new blueprint and implementation, where side effects may be introduced while repairing previously diagnosed deficiencies. We represent these two effects by nonnegative $\mathcal{F}_n$-measurable sequences $\alpha_n$ and $\beta_n$, respectively. The corrective effect of validation is represented by a nonnegative dissipation term $\gamma_n$.

	\begin{proposition}[Conditional convergence of the validation energy]
		\label{prop:validation_convergence}
		Suppose that the nonnegative adapted process $(L_n)_{n\geq0}$ satisfies
		\begin{equation}
			\label{eq:rs_validation}
			\mathbb{E}\!\left[L_{n+1}\mid\mathcal{F}_n\right]
			\leq
			(1+\alpha_n)L_n-\gamma_n+\beta_n,\qquad\text{a.s.,}
		\end{equation}
		where
		\[
		\sum_{n=0}^{\infty}\alpha_n<\infty,
		\qquad
		\sum_{n=0}^{\infty}\beta_n<\infty
		\qquad\text{a.s.}
		\]
		Here a.s. denotes almost sure convergence. Assume further that there exists a continuous function $c:[0,\infty)\to[0,\infty)$ with $c(0)=0$ and $c(\ell)>0$ for every $\ell>0$ such that
		\[
		\gamma_n\geq c(L_n)
		\qquad\text{a.s.}
		\]
		Then
		\[
		L_n\longrightarrow0
		\qquad\text{a.s.}
		\]
	\end{proposition}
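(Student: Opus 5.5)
This is a direct application of the Robbins--Siegmund almost-supermartingale theorem, followed by a short contradiction argument that uses the continuity and positivity of $c$. We also assume, as implicit in the statement, that $\gamma_n$ is $\mathcal{F}_n$-measurable and nonnegative (and $L_n$ integrable or a.s.\ finite), so that the classical theorem applies.

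\textbf{Step 1 (Robbins--Siegmund).} Given \eqref{eq:rs_validation} with nonnegative adapted $\alpha_n,\beta_n,\gamma_n$, $\sum_n\alpha_n<\infty$ and $\sum_n\beta_n<\infty$ a.s., the Robbins--Siegmund theorem gives two conclusions on a full-measure event:
\[
L_n \to L_\infty\in[0,\infty), \qquad \sum_{n=0}^\infty \gamma_n<\infty .
\]
If a self-contained argument is preferred, I would reproduce the standard proof. Set $\pi_n=\prod_{k<n}(1+\alpha_k)$, which converges a.s.\ to a finite positive limit. Define
\[
Y_n = \pi_n^{-1}\Bigl(L_n+\sum_{k<n}\gamma_k\Bigr) - \sum_{k<n}\pi_{k+1}^{-1}\beta_k ,
\]
which is a supermartingale. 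The sum $\sum_k \beta_k$ is only a.s.\ finite, not bounded, so I would localize with the stopping times $\tau_a=\inf\{n:\sum_{k\le n}\pi_{k+1}^{-1}\beta_k>a\}$. This makes $Y_{n\wedge\tau_a}+a$ a nonnegative supermartingale, which converges a.s. Letting $a\to\infty$ over the integers covers almost every $\omega$. Unwinding the definition then gives convergence of $L_n+\sum_{k<n}\gamma_k$, and since both parts are nonnegative, each part converges separately.

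\textbf{Step 2 (identify the limit).} Fix $\omega$ in the full-measure event from Step 1 on which also $\gamma_n\ge c(L_n)$ for all $n$. Then
\[
\sum_n c(L_n)\le\sum_n\gamma_n<\infty ,
\]
so $c(L_n)\to0$. By continuity of $c$ and $L_n\to L_\infty$, we get $c(L_\infty)=\lim_n c(L_n)=0$. Since $c(\ell)>0$ for every $\ell>0$, this forces $L_\infty=0$. Hence $L_n\to0$ a.s.

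\textbf{Main obstacle.} The only delicate point is Step 1 when the $\alpha_n,\beta_n$ are random, with summability holding only almost surely. The localization by $\tau_a$ handles this, and it is exactly the classical Robbins--Siegmund setting, so I would simply invoke that theorem. Step 2 then needs only continuity of $c$ at the (finite) limit point, not any uniform lower bound on $c$.
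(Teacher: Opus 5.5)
Your proof is correct and follows the paper's argument exactly. You invoke Robbins--Siegmund to get $L_n\to L_\infty<\infty$ and $\sum_n\gamma_n<\infty$ a.s., then use $\gamma_n\ge c(L_n)$, continuity of $c$, and $c(\ell)>0$ for $\ell>0$ to force $L_\infty=0$; the paper leaves implicit the hypotheses you make explicit (nonnegative adapted $\gamma_n$, integrable $L_n$), and your localized supermartingale sketch is sound.
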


	\begin{proof}
		The Robbins--Siegmund almost-supermartingale convergence theorem~\cite{RobbinsSiegmund1971}, applied to \eqref{eq:rs_validation}, implies that $L_n$ converges almost surely to a finite random variable $L_\infty\geq0$ and that $\sum_{n=0}^{\infty}\gamma_n<\infty$ almost surely. Since $\gamma_n\geq c(L_n)$, it follows that $c(L_n)\to0$ almost surely. Continuity of $c$ and the almost-sure convergence $L_n\to L_\infty$ give $c(L_\infty)=0$. The strict positivity of $c$ away from the origin yields $L_\infty=0$ almost surely.
	\end{proof}

	Proposition~\ref{prop:validation_convergence} separates the role of validation from the intrinsic capability of the language model. The result does not require deterministic generation. Instead, it requires that repository expansion and residual generative perturbations have finite cumulative effect, and that away from the acceptance region the diagnostic feedback induces strictly positive expected dissipation. Under these assumptions, the validation energy converges almost surely to the region in which the prescribed actionable admissibility and concordance tolerances are satisfied. The proposition is an asymptotic result: it does not imply that the stopping region is reached in finitely many iterations. Establishing finite-time guarantees, or verifying the drift conditions for specific classes of language-model generators, requires additional analysis.

	\subsection{The role of LLM}
	
	The language model enters the framework in two capacities. First, it is generative: from a model specification it produces a methodological blueprint together with an executable pricing implementation, through the generation operator $\Phi_\mathcal{R}$. Second, it is evaluative: it contributes to the Block~I validity inspections within the measurement $\mu_\mathcal{R}$ and to the diagnosis operator $\Gamma_\mathcal{R}$.
	
	The framework deliberately separates language-model reasoning from numerical measurement. Once the implementation, the validation family $\mathcal{M}_h$, and the repository are fixed, the admissibility and benchmarking evidence is generated deterministically from numerical calculation, so the figures and flags of the numerical checks are reproducible. Language-model reasoning is confined to three stages: the generation operator $\Phi_\mathcal{R}$, the Block~I inspections, and the diagnosis operator $\Gamma_\mathcal{R}$.
	
	The quality of the resulting implementation depends on the capability of the language model. Stronger models may identify more effective revisions or discover alternative formulations that weaker ones overlook, as illustrated later by the HHW and G-SVJD case studies. The stochastic interpretation in Section~\ref{ssec:stochastic} thus does not assert convergence for an arbitrary generator. Rather, Proposition~\ref{prop:validation_convergence} identifies sufficient drift and perturbation conditions under which repository-driven refinement approaches the validation-defined acceptance region. The reproducibility experiments of Section~\ref{sec:ablation} provide empirical evidence consistent with this interpretation: once numerical measurement anchors the validation process, generators of comparable capability can be guided to validated implementations.
\section{Implementation of the validation operator}\label{sec:procedure}

This section describes how the validation framework of \S\ref{ssec:framework} is realized in practice. We focus on the measurement and diagnosis procedures that emerged from the experiments and were subsequently incorporated into the repository.

\subsection{Measurement design based on expert knowledge}\label{sec:iteration}

The repository guides the definition of the validation domain and the measurements performed during each iteration. Its accumulated experience identifies challenging parameter regimes and informative diagnostic tests. The measurement procedure can thus evolve as validation experience is transferred across model classes.

Each validation iteration begins with the validity assessment of Block~I. The blueprint is checked against the underlying mathematical sources, the generated implementation is compared with the methodology specified in the blueprint, and the resulting code is required to execute successfully with finite and well-defined outputs throughout the validation domain. Failure of any of these checks terminates the current iteration and produces a diagnostic report; valid implementations proceed to Block~II.

\paragraph{Validation domain.}

The admissibility checks are performed on a finite strike--maturity domain covering typical operating regimes and numerically challenging configurations. The strike range at each maturity is chosen according to the characteristic scale of the underlying log-return distribution: short maturities are concentrated near the money, while longer maturities admit wider strike ranges. The resulting configurations define the validation grid $\mathcal G_h$. The maturity grid and strike-selection procedure are specified in Section~\ref{sec:settings}.

\paragraph{Stress-test design.}

Alongside the default parameter configuration, the implementation is evaluated on stressed parameter sets. Their design is model-agnostic and based on the local sensitivity of the at-the-money implied volatility and implied-volatility skew at selected maturities. Finite differences identify parameter perturbations that produce substantial changes while remaining within the admissible parameter region.

Representative positive and negative perturbations are ranked by their impact, and a limited number of informative configurations is retained. Model-specific abnormal regimes documented in the literature, such as violations of the Feller condition in Heston-type models, are added when applicable. The default, stressed, and abnormal configurations together constitute the validation family $\mathcal M_h$. Further details of the stress-set design are given in Appendix~\ref{app:stress_design}.

\paragraph{Consistency checks.}

Block~II evaluates four consistency properties: put--call parity, monotonicity, convexity, and the implied risk-neutral density. For each property, a residual is measured and compared with a prescribed tolerance. Residuals exceeding the tolerance are recorded as violations and passed to the diagnosis stage.

The put--call parity tests an analytical identity and is particularly sensitive to implementation errors. Monotonicity checks the no-arbitrage ordering of call prices in  strike, while convexity is assessed through finite-difference approximations of butterfly spreads. The latter two reveal local inconsistencies in the pricing surface. Finally, the implied risk-neutral density is recovered through the Breeden--Litzenberger relation~\cite{breeden1978} and checked for non-negativity and normalization, exposing more global deficiencies in the numerical approximation. The worst residual over $\mathcal G_h$ determines the admissibility error $A_M$ for each tested configuration.

\paragraph{Benchmark checks.}

Block~III compares the implementation with a reference value of the target operator $\mathcal P_h^{\mathrm{ref}}$. Reference sources follow a fixed hierarchy. Published numerical benchmarks are used when available, followed by known degeneration limits and, if neither is available, the local affine approximation of Appendix~\ref{app:affine}. Numerical refinement provides a final internal consistency check. Only the highest-priority applicable reference is used.

The worst discrepancy over $\mathcal G_h$ defines the concordance error $C_M$. Together with the admissibility measurements, these results form the numerical evidence supplied to the diagnosis stage. The accuracy grade and its interpretation for the different model and methodology classes are described in Section~\ref{sec:settings}.

Table~\ref{tab:procedure} summarizes the practical realization of the three validation blocks.

The validation studies also led to several practical refinements of the measurement procedure. To apply the same validation operator across different model classes, each generated implementation is wrapped in a standardized adapter providing a common representation of prices, parameters, diagnostics, and computational statistics.

The definition of the validation grid was refined during the experiments. A fixed moneyness range places short-maturity contracts far into the tails of the log-return distribution, where numerical truncation effects may dominate the validation results. The regular grid uses a maturity-dependent strike window whose width scales with the characteristic volatility of the horizon log-return. Since such a restriction may hide instabilities confined to extreme regions, the grid is supplemented by targeted boundary tests at very short and long maturities and at extreme moneyness levels. Their precise specification is given in Appendix~\ref{app:stress_design}.

Finally, the evidence record stores the magnitude of each violation rather than only whether a prescribed tolerance is exceeded. This allows the diagnosis stage to distinguish isolated minor deficiencies from systematic shortcomings and to prioritize subsequent revisions.

\begin{table}[H]
	\centering
	\caption{\revthree{The three validation blocks of RIDGE and their role in measurement and diagnosis.}}
	\label{tab:procedure}
	\small
	\renewcommand{\arraystretch}{1.15}
	\begin{tabular}{p{2.8cm} p{3.6cm} p{3.4cm} p{3.4cm}}
		\toprule
		\textbf{Step} & \textbf{Procedure} & \textbf{Purpose} & \textbf{Consequence of failure} \\
		\midrule
		\multicolumn{4}{l}{\revthree{\textit{Block~I (validity): does $\widehat{\mathcal{P}}_h$ constitute a valid pricing implementation?}}} \\
		\midrule
		\revthree{I.1 Blueprint inspection} &
		\revthree{Compare $\mathcal{B}$ with cited sources; check internal consistency of model, methodology, and implementation details} &
		\revthree{Ensure the methodology to be implemented is mathematically coherent} &
		\revthree{Iteration terminates; $\mathcal{B}$ or $\Phi_\mathcal{R}$ revised} \\

		\revthree{I.2 Implementation inspection} &
		\revthree{Match source code against the methodology of $\mathcal{B}$} &
		\revthree{Ensure $\widehat{\mathcal{P}}_h$ implements what $\mathcal{B}$ specifies} &
		\revthree{Iteration terminates; implementation revised} \\

		\revthree{I.3 Execution assessment} &
		\revthree{Execute the implementation and check that all outputs are finite and free of failures} &
		\revthree{Ensure $\widehat{\mathcal{P}}_h$ executes successfully across $\mathcal{M}_h$} &
		\revthree{Iteration terminates; implementation revised} \\
		\midrule

		\multicolumn{4}{l}{\revthree{\textit{Block~II (admissibility): does $\widehat{\mathcal{P}}_h(M)$ lie in the arbitrage-free set of price vectors $\mathcal{A}_M$?}}} \\
		\midrule
		\revthree{II.1 Domain definition} &
		\revthree{Generate the validation grid and stress-test parameter sets} &
		\revthree{Construct the validation family $\mathcal{M}_h$ and fix the grid $\mathcal{G}_h$} &
		None \\

		\revthree{II.2 Consistency checks} &
		Parity, monotonicity, convexity, and density checks over $\mathcal{M}_h$ &
		Measure the admissibility error $A_M(\widehat{\mathcal{P}}_h)$ &
		\revthree{Admissibility error $A_M$ recorded in $\mathcal{E}_{\mathrm{II}}$} \\
		\midrule

		\multicolumn{4}{l}{\revthree{\textit{Block~III (benchmarking): does $\widehat{\mathcal{P}}_h$ agree with the available reference evidence?}}} \\
		\midrule
		\revthree{III.1 Reference comparison} &
		Compare against the highest-priority available reference &
		Measure the concordance error $C_M(\widehat{\mathcal{P}}_h)$ &
		\revthree{Concordance error $C_M$ recorded in $\mathcal{E}_{\mathrm{III}}$} \\
		\midrule

		\multicolumn{4}{p{13.6cm}}{
		\textit{Diagnosis:}
		a Block~I failure is diagnosed immediately and the iteration terminates.
		After Blocks~II and III, the numerical evidence
		$(\mathcal{E}_{\mathrm{II}},\mathcal{E}_{\mathrm{III}})$
		is interpreted by $\Gamma_{\mathcal R}$, which returns the diagnostic report $\mathcal D$, the updated repository $\mathcal R'$, and the stop signal $S$.
		} \\
		\bottomrule
	\end{tabular}
\end{table}

	\subsection{Iteration diagnostics}
	
	The diagnostic report contains one entry for each relevant finding identified during a validation iteration. Each entry records the targeted component, whether the pricing implementation or the repository design, a diagnostic category drawn from a fixed taxonomy, a description of the observed issue, and a suggested corrective action. The purpose of the record is to identify the most plausible causes, to make the state of the current iteration explicit, and to guide the next revision.
	
	\begin{table}[H]
		\centering
		\caption{\reveight{Diagnostic categories used by the validation operator.}}
		\label{tab:tags}
		\small
		\renewcommand{\arraystretch}{1.15}
		\begin{tabular}{p{3.0cm}p{10.2cm}}
			\toprule
			\textbf{Category} & \textbf{Meaning} \\
			\midrule
			\texttt{[BUG]} &
			Defects that prevent the implementation from correctly realizing the intended numerical method. \\
            \texttt{[CONSISTENCY]} &
			Violations of admissibility requirements, including parity, monotonicity, convexity, and density consistency. \\
			\texttt{[METHOD-ALGO]} &
			Possible improvements within the chosen numerical methodology or its implementation. \\
			\texttt{[METHOD-HYPER]} &
			Possible improvements in numerical controls selection, adaptivity, or tuning rules. \\
			\texttt{[METHOD-CHOICE]} &
			Indications that a different numerical methodology may be better suited for the model class under consideration. \\
			\texttt{[EXPECTED]} &
			Residual limitations attributed to the model, benchmark, or approximation framework that survive after several validation iterations. \\
			\bottomrule
		\end{tabular}
	\end{table}
	
	The diagnostic categories serve different purposes within the validation iteration. Validity failures are classified as \texttt{[BUG]}, where fatal ones cause an immediate halt and must be resolved in another iteration. Admissibility violations are typically classified as \texttt{[CONSISTENCY]}, while benchmarking discrepancies may indicate implementation shortcomings (\texttt{[METHOD-ALGO]}), hyperparameter choices (\texttt{[METHOD-HYPER]}), or limitations of the numerical methodology itself (\texttt{[METHOD-CHOICE]}). Persistent deficiencies that remain after several revision iterations may ultimately indicate that an inherent limitation of the current implementation, leading to an \texttt{[EXPECTED]} diagnosis. The full classification is summarized in Table~\ref{tab:tags}.
	
	Alongside the per-finding entries, the diagnosis assigns a heuristic score to each category. The scores are calibrated to be broadly comparable across categories and iterations and provide a compact summary of the implementation state. They are intended to prioritize revisions and to track progress across validation iterations; they do not constitute validation criteria themselves.
	
	The stop signal \(S\) implements the tolerance criterion introduced in Section~2.2. In practice, this criterion is interpreted as indicating that no remaining discrepancy can be removed by a reasonable revision of the implementation. Residual deficiencies that cannot be eliminated are  ultimately transferred to the category \texttt{[EXPECTED]}, either because the corresponding regime was declared abnormal in the blueprint prior to any measurement or because repeated revision attempts have failed and a theoretical explanation has been recorded.
	
	When \(S=\textit{continue}\), the diagnostic report influences the next iteration in two ways. First, it guides the revision of the pricing implementation through the operator \(\Phi_{\mathcal R}\). Second, it contributes to the refinement of the repository by recording newly observed failure modes and effective corrective actions. 

The validation repository stores structured numerical expertise rather than source code. Entries are maintained as validation progresses, with explicit status tracking for attempted resolutions. Table~\ref{tab:repo_categories} summarizes the main entry categories and their characteristics.

\begin{table}[ht]
\centering
\caption{Repository entry categories and their characteristics.}
\label{tab:repo_categories}
\begin{tabular}{p{0.18\linewidth} p{0.38\linewidth} p{0.38\linewidth}}
\toprule
\textbf{Category} & \textbf{Description} & \textbf{Comments} \\
\midrule
Model-specific action & Records a defect identified during validation of a particular model, and the corrective action taken. & Traced across iterations with outcome and status. May involve multiple attempts. \\
\addlinespace
Method-specific action & Records an algorithmic or methodological implementation improvement tailored to specific parameter regimes. & Traced with outcome and status. Parameter-dependent rules discovered and refined across iterations. \\
\addlinespace
Generic principle & Records a best-practice experience that applies broadly across a class of models or numerical methods. & Inherited expertise; stored as reusable guidance. \\
\addlinespace
Model linkage & Records inheritance relationship from prior validation tasks. & e.g., Heston $\to$ Bates $\to$ Rough Heston. \\
\bottomrule
\end{tabular}
\end{table}

\begin{exam}[Typical repository entries]
\rm During the validation study of the Heston and Bates models, the following repository entries are created.
\item \textit{Model-specific action (Heston):}
  \begin{itemize}[nosep]
    \item \textit{Observed validation evidence:} Errors in deep Feller-violation regimes for long maturities.
    \item \textit{Description:} The standard domain and resolution fail for extreme parameter sets.
    \item \textit{Corrective action:} Widen the domain and adapt resolution in Feller-violating regimes.
    \item \textit{Evolution} \& \textit{status:} Three attempts across Iteration 7--9; final resolution used smooth parity blending. Status: \textit{resolved}.
  \end{itemize}

\item \textit{Method-specific action (Bates):}
  \begin{itemize}[nosep]
    \item \textit{Observed validation evidence:} Overflow in call coefficients for extreme jump parameters.
    \item \textit{Description:} Call COS coefficients may overflow when the domain is wide.
    \item \textit{Corrective action:} Price the put side when the domain is wide and recover calls via put-call parity.
    \item \textit{Evolution} \& \textit{status:}  The first attempt was insufficient; replaced by a more robust conditioning strategy that works for all wide-domain cases. Status: \textit{resolved}.
  \end{itemize}

\item \textit{Generic principle:}
  \begin{enumerate}[nosep]
    \item The COS truncation domain must reflect the distribution width at maturity, not merely the initial variance scaled by time.
    \item When direct computation is unstable, compute the stable option price and recover the desired option value using put-call parity.
  \end{enumerate}
  Each validation iteration queries the repository using the current model class, numerical methodology, and observed diagnostic evidence; relevant entries are retrieved and applied to guide implementation refinement. Successful resolutions accumulate over time, while failed attempts are logged to prevent repeated effort. This allows expertise to be transferred across model classes and validation iterations without requiring the validation agent to rediscover known fixes.
\end{exam}

\section{Experiments}\label{sec:experiments}

We apply RIDGE to five stochastic-volatility models in two types of validation study. The first concerns Heston, Bates, and rough Heston, for which established pricing methodologies are available. Heston combines the analytic characteristic function of~\cite{heston1993} with the Fourier-cosine (COS) method~\cite{fang2008}. Bates extends the characteristic function with a jump component~\cite{bates1996}, while rough Heston obtains the transform numerically from the fractional Riccati equation of El~Euch and Rosenbaum~\cite{el2019rough}; both are subsequently priced by the COS method as well. For these three models, the validation task is to identify and remove defects in generated implementations of documented pricing methodologies. Section~\ref{sec:rep_models} follows their development from the initial generated implementations to their validated forms.

The second type of study concerns the Heston--Hull--White (HHW) and G-SVJD models. In these cases, validation indicates limitations of the initially generated Monte Carlo methodologies that cannot be resolved through implementation refinement alone. The diagnostic process therefore indicates alternative semi-analytic CCF-PDE formulations, developed and validated in Section~\ref{sec:meth_dev}. Section~\ref{sec:performance} compares the resulting implementations in terms of accuracy, robustness, and computational cost.

\subsection{\texorpdfstring{Experiment settings}{Experiment settings}}
\label{sec:settings}

All five validation studies use Claude Opus~4.8 as the language model responsible for the three model-driven stages of RIDGE: blueprint-and-implementation generation, the Block~I inspections, and diagnostic interpretation. The experiments use a common validation domain and measurement protocol.

The maturity grid is
\[
T \in \{0.001,\,0.004,\,0.01,\,0.04,\,0.1,\,0.25,\,0.5,\,1.0,\,2.0,\,5.0\},
\]
and the candidate moneyness grid is
\[
\log(K/F) \in [-0.5, 0.5],
\]
 where \(K\) is equally and densely spaced, and \(F\) denotes the forward price at maturity \(T\). The concentration of maturities at the short end targets the region where numerical challenges are typically most pronounced, while the candidate moneyness grid extends to log-moneyness values of \(\pm0.5\), covering far in- and out-of-the-money regions.

At each maturity, the selected strikes are restricted using the scale of the COS truncation range,
\[
|\log(K/F)|
\le
\tfrac12 L \sqrt{\bar v},
\qquad L=14,
\]
where
\[
\bar v
=
\theta T
+
(v_0-\theta)
\frac{1-e^{-\kappa T}}{\kappa}
\]
is the expected integrated variance to maturity and reduces to \(v_0T\) in the absence of mean reversion. The resulting strike--maturity configurations define the validation grid \(\mathcal G_h\). The stress configurations and boundary tests used alongside the default parameter set are specified in Appendix~\ref{app:stress_design}.

The measurements record four consistency residuals at each selected point. Writing \(V_c(T,K)\) and \(V_p(T,K)\) for the call and put prices, respectively, and \(P(0,T)\) for the maturity-\(T\) discount factor, the implied forward density is recovered through the Breeden--Litzenberger relation
\[
f_{\!Q}(T,K)=\frac{\partial_K^2 V_c(T,K)}{P(0,T)}.
\]

The parity residual is defined as the relative put--call parity error,
\[
e_1=
\frac{
	\bigl|V_c-V_p-(F-K)P(0,T)\bigr|
}
{K\,P(0,T)}.
\]

The monotonicity residual is the worst no-arbitrage ordering violation over strike,
\[
e_2
=
\max_{i, j}
[\,V_c(T_i,K_{j+1})-V_c(T_i,K_j)\,]_+.
\]

The convexity residual is the worst negative strike butterfly,
\[
e_3
=
\max_{i, j}
[\, -(V_c(T_i,K_{j-1})
-2V_c(T_i,K_j)
+V_c(T_i,K_{j+1}))\,]_+,
\]
while the density residual measures the deviation of the recovered density from unit mass,
\[
e_4
=
\left|
\int f_{\!Q}(T,K)\,\mathrm dK
-1
\right|.
\]

A configuration is flagged whenever a residual exceeds its prescribed tolerance. The tolerances are fixed across models. In particular, the parity tolerance is \(10^{-5}\) for deterministic transform implementations and is relaxed to \(10^{-3}\) for implementations subject to intrinsic Monte Carlo error.

Benchmarking compares the implementation with a reference solution. The benchmark deviation \(\Delta\mathrm{IV}\) is the maximum absolute at-the-money implied-volatility discrepancy and is mapped to a letter grade according to
\[
\lambda(\Delta\mathrm{IV})
=
\begin{cases}
	\mathrm{D}, & \Delta\mathrm{IV}<0.001\%,\\
	\mathrm{C}, & 0.001\%\le \Delta\mathrm{IV}<0.01\%,\\
	\mathrm{B}, & 0.01\%\le \Delta\mathrm{IV}<0.10\%,\\
	\mathrm{A}, & 0.10\%\le \Delta\mathrm{IV}<1.00\%,\\
	\mathrm{F}, & \Delta\mathrm{IV}\ge 1.00\%.
\end{cases}
\]

The diagnosis interprets these grades relative to the numerical complexity of the model and the pricing methodology rather than against a universal acceptance threshold. Lower grades may be tolerated for challenging models, for example those with slow mean reversion or jumps, and for Monte Carlo implementations, whose slower convergence generally makes the highest accuracy grades more challenging to attain than for deterministic methods.

Based on this evidence, the diagnosis assigns heuristic scores to the diagnostic categories. The scores prioritize revisions and track progress across validation iterations; they are not validation criteria. When the documented methodology is already established, the five categories \texttt{[BUG]}, \texttt{[METHOD-ALGO]}, \texttt{[METHOD-HYPER]}, \texttt{[CONSISTENCY]}, and \texttt{[EXPECTED]} are each scored out of \(20\), giving an iteration total out of \(100\). If a \texttt{[METHOD-CHOICE]} diagnosis arises during validation, this category is additionally scored out of \(20\) and contributes to the total.

The reported validation scores should be interpreted as heuristic summaries of the accumulated validation evidence rather than as quantitative measures of implementation accuracy. They are derived from the deterministic validation residuals, admissibility tests, and benchmark comparisons described above. They summarize validation progress across successive validation iterations, while the underlying validation evidence remains the primary basis for assessing implementation quality.

A residual that cannot be eliminated is eventually transferred to \texttt{[EXPECTED]}. This occurs when the corresponding regime was declared abnormal in the blueprint before measurement, or when repeated independent revision attempts fail to remove the discrepancy and a theoretical explanation has been recorded in the repository.
	
\subsection{Replication runs: Heston, Bates, and rough Heston}\label{sec:rep_models}

To study how RIDGE validates and refines existing pricing methodologies, we consider three documented methods of increasing numerical complexity, for the Heston, Bates, and rough Heston models.

The three pricing methodologies share the COS valuation step and differ primarily in the derivation of the characteristic function. Heston evaluates an analytic transform, Bates augments this transform with a jump component, and rough Heston computes the transform numerically by solving a fractional Riccati equation. For Heston and Bates, the numerical approximation of choice is the COS valuation once the characteristic function has been specified. For rough Heston, the characteristic function supplied to the COS step is itself computed numerically by a Volterra solver with its own discretization and stability properties.

Table~\ref{tab:tier_progression} records the main implementation revision at each iteration together with the corresponding diagnostic scores. Each run starts from an unadapted implementation of the documented methodology and ends with a validated one: Heston improves over ten iterations from \(31/100\) to \(97/100\), Bates over five iterations from \(58/100\) to \(96/100\), and rough Heston over six iterations from \(32/100\) to \(99/100\). The scores serve as progress indicators and identify the diagnostic categories that still limit the implementation; the underlying evidence is provided by the deterministic measurements of Section~\ref{sec:iteration}.
	
	\begin{table}[H]
		\centering
		\caption{\revfive{Per-iteration category scores (column headings abbreviate the tags of Table~\ref{tab:tags}: [ALGO]$=$[METHOD-ALGO], [HYP]$=$[METHOD-HYPER], \reveight{[CONS]$=$[CONSISTENCY],} and [EXP]$=$[EXPECTED]). The scores summarize the diagnostic interpretation of the measured evidence; the total is the sum of [BUG], [ALGO], [HYP], [CONS], and [EXP].}}
		\label{tab:tier_progression}
		\footnotesize
		\setlength{\tabcolsep}{2pt}
		\begin{tabular}{cclcccccr}
			\toprule
			\textbf{Model} & \textbf{Iter.} & \textbf{Principal refinement} & \texttt{[BUG]} & \texttt{[ALGO]} & \texttt{[HYP]} & \texttt{[CONS]} & \texttt{[EXP]} & \textbf{Total} \\
			\midrule
			\multirow{10}{*}{Heston}
			& 1 & analytic ChF; naive COS domain & 2  & 4  & 5  & 0  & 20 & 31/100 \\
			& 2 & payoff endpoints; cumulant domain & 6  & 4  & 5  & 2  & 20 & 37/100 \\
			& 3 & put domain; strike coverage & 14 & 4  & 8  & 4  & 20 & 50/100 \\
			& 4 & vectorization; adaptive resolution & 12 & 16 & 8  & 8  & 20 & 64/100 \\
			& 5 & wrapper passes adaptive \(N\) & 16 & 14 & 16 & 12 & 20 & 78/100 \\
			& 6 & out-of-the-money valuation & 14 & 16 & 16 & 12 & 20 & 78/100 \\
			& 7 & batched OTM valuation & 20 & 18 & 14 & 14 & 20 & 86/100 \\
			& 8 & Feller-aware domain (1) & 20 & 18 & 17 & 16 & 20 & 91/100 \\
			& 9 & Feller-aware domain (2) & 20 & 18 & 18 & 17 & 20 & 93/100 \\
			& 10 & smooth put--call blending & 20 & 20 & 20 & 20 & 17 & 97/100 \\
			\midrule
			\multirow{5}{*}{Bates}
			& 1 & jump-augmented ChF; COS & 12 & 10 & 6  & 10 & 20 & 58/100 \\
			& 2 & inherited Heston refinements & 20 & 18 & 18 & 17 & 20 & 93/100 \\
			& 3 & banded put--call blending; put estimator & 20 & 19 & 19 & 18 & 20 & 96/100 \\
			& 4 & fourth-cumulant domain & 20 & 20 & 20 & 19 & 20 & 99/100 \\
			& 5 & validator quadrature corrected & 20 & 20 & 20 & 20 & 16 & 96/100 \\
			\midrule
			\multirow{6}{*}{\shortstack{Rough\\Heston}}
			& 1 & fractional-Riccati ChF; COS & 2  & 4  & 4  & 2  & 20 & 32/100 \\
			& 2 & fractional Adams solver & 6  & 10 & 4  & 2  & 20 & 42/100 \\
			& 3 & stability-based step count & 14 & 14 & 5  & 4  & 20 & 57/100 \\
			& 4 & inherited COS refinements & 18 & 15 & 15 & 14 & 20 & 82/100 \\
			& 5 & implicit fractional corrector & 18 & 18 & 18 & 16 & 20 & 90/100 \\
			& 6 & narrower parity band & 20 & 20 & 20 & 20 & 19 & 99/100 \\
			\bottomrule
		\end{tabular}
	\end{table}
The Heston replication run is the first validation study and begins without inherited expertise. Most of the numerical machinery later reused by Bates and rough Heston is developed during this run. The principal refinement concerns the COS truncation domain and the number of cosine modes \(N\). The preliminary implementation employs a fixed half-width \(L\sqrt{v_0T}\) and a fixed mode count, ignoring correlation, vol-of-vol, and the prevailing variance regime. These choices fail on the stress configurations and are progressively replaced by adaptive rules based on the cumulants of \(\log S_T\), strike-coverage requirements, and additional widening in non-Feller regimes. Writing \(h_n\) and \(N_n\) for the half-width and mode count at iteration \(n\),
	\begin{equation}
		h_n =
		\begin{cases}
			L\sqrt{v_0 T}, & n=1,\\[2pt]
			L\,s_x, & n=2,\\[2pt]
			\max\!\bigl(L\,s_x,\ \max_i|x_i|+0.1\bigr), & 3\le n\le 7,\\[2pt]
			\max\!\bigl(L_{\mathrm{eff}}\,s_x,\ \max_i|x_i|+0.1\bigr), & n\ge 8,
		\end{cases}
	\end{equation}
	\begin{equation}
		N_n =
		\begin{cases}
			128, & 1\le n\le 3,\\[2pt]
			\min\!\bigl(\max(24\,h_n/s_x,\,128),\,1024\bigr), & 4\le n\le 7,\\[2pt]
			\min\!\bigl(\max(24\,h_n/s_x,\,128)\,(2.5-f),\,4096\bigr), & n\ge 8,
		\end{cases}
	\end{equation}
	with \(x_i=\log(K_i/F)\) the selected log-moneynesses and
	\[
	L_{\mathrm{eff}}
	=
	\min\!\bigl(L(2.5-f),16\bigr)
	\]
	for \(f<1\), and \(L_{\mathrm{eff}}=L\) otherwise. For \(f<0.05\), where the analytic estimate \(s_x\) overstates the long-maturity spread, \(s_x\) is capped by
	\[
	\sqrt{\min(|c_2|,\,2.5\,\bar v)},
	\]
	with \(\bar v\) the expected integrated variance of Section~\ref{sec:settings}.
	
	A second refinement concerns deep in- and out-of-the-money contracts. Each strike is priced using the representation with the lighter integrand tail: the call series for \(K\ge F\) and the put series for \(K<F\), with the complementary price recovered using put--call parity. The two estimates are combined through the smooth put--call blending
\begin{equation}
	\widehat{V_c}(K)
	=
	(1-w_{\mathrm{put}})\,V_c(K)
	+
	w_{\mathrm{put}}
	\bigl(V_p(K)+(F-K)e^{-rT}\bigr),
	\qquad
	w_{\mathrm{put}}
	=
	\frac{1}{1+e^{m/\delta}},
\end{equation}
where \(m=\log(K/F)\) and \(\delta=\max(0.25\,s_x,0.01)\). The weight changes smoothly between the call and put representations and assigns equal weight to both at the forward.

The final Heston implementation is a fully adaptive COS method. It satisfies the admissibility checks on the standard and non-Feller stress sets, prices one maturity in approximately \(0.8\) ms, and benchmarks at Level~B. The outstanding deviations occur in the deep-Feller regime, where the variance may approach zero and the resulting tail behaviour is difficult to represent with a finite cosine expansion. This leaves a density deficit and a far out-of-the-money monotonicity residual.

The Bates replication run inherits the adaptive COS rules developed during the Heston run and starts with the jump-augmented characteristic function already implemented. It therefore requires fewer validation iterations. The main additional refinement concerns the truncation rule in the presence of jumps: incorporating the fourth jump cumulant widens the COS truncation domain to account for the heavier tails induced by the jump process. The final Bates implementation prices one maturity in approximately \(1.0\) ms and reaches \(96/100\).

The rough Heston replication run introduces an additional numerical layer because its characteristic function is not available in closed form. The inherited COS machinery can be reused after the fractional-Riccati transform has been computed with sufficient accuracy. The model retains the Heston price dynamics but replaces the square-root variance process by the rough Volterra process
\begin{equation*}
	v_t = v_0 + \frac{1}{\Gamma(\alpha)}\int_0^t (t-s)^{\alpha-1}\bigl[\kappa(\theta-v_s)\,\mathrm{d}s + \sigma_v\sqrt{v_s}\,\mathrm{d}Z_s\bigr],\qquad \alpha = H+\tfrac12,
\end{equation*}
where \(\kappa\) is the mean-reversion speed, \(\theta\) the long-run variance, \(\sigma_v\) the vol-of-variance, and \(H\in(0,\tfrac12]\) the Hurst parameter. For \(\alpha<1\), the variance process exhibits rough behaviour~\cite{el2019rough}. The main task in this run is thus to construct a reliable solver for the associated characteristic function.

The principal refinement is the replacement of the explicit fractional-Adams scheme by an implicit corrector. At each Fourier node, the quadratic equation arising in the fractional update is solved in closed form and the stable root is selected. This stabilizes the transform calculation and allows the time-step selection to be governed primarily by accuracy rather than by stability constraints. The final rough Heston solver prices one maturity in approximately \(27\) ms, benchmarks at Level~C, and reaches \(99/100\). Its remaining residual is a small tail-mass error on the abnormal parameter set, which is diagnosed as a representation limitation rather than an implementation defect.

The three runs use a common repository. The Heston run is performed first and develops most of the adaptive COS machinery. Bates adopts these rules early in its validation, whereas rough Heston can reuse them after its fractional-Riccati solver has become sufficiently reliable. Table~\ref{tab:inheritance} summarizes the inherited and model-specific components.

\begin{table}[H]
	\centering
	\caption{Repository inheritance for the Bates and rough Heston runs: components inherited from the completed Heston validation and model-specific developments.}
	\label{tab:inheritance}
	\footnotesize
	\begin{tabular}{@{}p{0.34\textwidth} p{0.29\textwidth} p{0.29\textwidth}@{}}
		\toprule
		\textbf{Construction (Heston repository)} & \textbf{Bates} & \textbf{Rough Heston} \\
		\midrule
		Cumulant domain, strike-coverage rule, adaptive resolution-\(N\) rule, Feller-aware widening, vectorized series with characteristic-function cache, and out-of-the-money pricing with smooth put--call blending
		& inherited at iteration~2
		& inherited at iteration~4 \\
		\addlinespace
		Characteristic function
		& Heston transform inherited and extended with jumps; validated in the initial implementation (iteration~1)
		& not inherited; fractional-Riccati transform constructed over iterations~1--3 \\
		\addlinespace
		Additional model-specific developments
		& jump fourth-cumulant domain and banded put-side estimator (iterations~3--4)
		& fractional Adams solver, stability-based step-count rule, and implicit corrector (iterations~2--5) \\
		\bottomrule
	\end{tabular}
\end{table}

Across the three runs, adaptive numerical controls consistently replace fixed choices that fail under stressed configurations. The admissibility measurements also reveal deficiencies that benchmark agreement alone does not expose. The per-iteration revision log, including detected regressions and revisions to the validator, along with the per-iteration accuracy, benchmark results, computational costs, and evolution of the dominant stress sets, are provided as Supplementary Material in the GitHub repository: \url{https://github.com/ShQiangLiu/ridge}.
	
\subsection{Validation as a driver of methodological development}\label{sec:meth_dev}

The replication studies of Section~\ref{sec:rep_models} demonstrate that RIDGE can detect and remove implementation defects within an established pricing methodology. We next investigate whether validation can also reveal limitations of the methodology itself and motivate the development of an alternative.

To this end, we consider two models for which the pricing methodology initially generated from the literature proves unsuitable: the Heston--Hull--White (HHW) model~\cite{grzelak2011,HaentjensInTHout2012}, which combines Heston stochastic volatility with a Hull--White short-rate process, and the generalized stochastic-volatility jump-diffusion (G-SVJD) model~\cite{fusari2025}, whose variance diffusion contains the non-affine power-law term $\sigma_vv_t^p$. Neither model admits a directly usable closed-form characteristic function in the classical affine sense, and in both cases the first implementation generated by the language model is based on Monte Carlo simulation.

These validation studies reveal a different failure mode from those encountered in the replication runs. The Monte Carlo implementations can be refined and made internally consistent, but their remaining deficiencies cannot be removed by correcting the code or tuning numerical controls. The underlying limitation is structural: the methodology itself cannot attain the accuracy and efficiency targets prescribed by the validation objective.

The diagnosis consequently shifts from implementation refinement to method selection. Rather than proposing further implementation revisions, it identifies the need for a different pricing representation. In both models, the same structural observation emerges: conditioning on the variance path leaves a conditionally Gaussian log-return, and the Feynman--Kac theorem then reduces pricing to a one-dimensional conditional characteristic-function PDE (CCF-PDE). This formulation eliminates Monte Carlo sampling error and replaces simulation by a deterministic solve.

The validation of both models proceeds through three phases:
\begin{enumerate}
	\item validation of the generated Monte Carlo implementation;
	\item identification of the conditional characteristic-function PDE as a more suitable pricing representation;
	\item validation and refinement of the resulting deterministic solver.
\end{enumerate}

Table~\ref{tab:hhw_progression} records the main revision and category scores at each iteration of the two studies.
	
	\begin{table}[H]
		\centering
		\caption{\revsix{HHW and G-SVJD per-iteration category scores (column headings as in Table~\ref{tab:tier_progression}, with \texttt{[CHC]}$=$[METHOD-CHOICE]). The total is the sum of all six categories. The final HHW implementation reaches $118/120$ at iteration~8 and the final G-SVJD implementation $117/120$ at iteration~7.
    }}
		\label{tab:hhw_progression}
		\footnotesize
		\setlength{\tabcolsep}{2.8pt}
		\begin{tabular}{clcccccc c}
			\toprule
			\textbf{Iteration} & \textbf{Main revision} & \textbf{[BUG]} & \textbf{[ALGO]} & \textbf{[HYP]} & \textbf{[CONS]} & \textbf{[CHC]} & \textbf{[EXP]} & \textbf{Total} \\
			\midrule
			\multicolumn{9}{l}{\textbf{Model: HHW}} \\
			\midrule
			1 & Euler Monte Carlo implementation & 4 & 4 & 4 & 2 & 4 & 20 & 38 \\
			2 & rate initialization; full truncation & 14 & 6 & 5 & 4 & 4 & 20 & 53 \\
			3 & transition to the CCF-PDE & 16 & 10 & 6 & 8 & 20 & 20 & 80 \\
			4 & inherited COS-method refinements & 18 & 14 & 16 & 14 & 20 & 20 & 102 \\
			5 & CIR-moment boundary; adaptive resolution & 18 & 16 & 15 & 13 & 20 & 20 & 102 \\
			6 & phase-only recentering & 18 & 17 & 18 & 15 & 20 & 20 & 108 \\
			7 & out-of-the-money valuation & 18 & 18 & 18 & 16 & 20 & 20 & 110 \\
			8 & efficient deterministic implementation & 20 & 20 & 20 & 20 & 20 & 18 & \textbf{118} \\
			\midrule
			\multicolumn{9}{l}{\textbf{Model: G-SVJD}} \\
			\midrule
			1 & Euler Monte Carlo implementation & 12 & 4 & 8 & 6 & 4 & 20 & 54 \\
			2 & variance-path reconstruction & 16 & 8 & 10 & 8 & 6 & 20 & 68 \\
			3 & transition to the CCF-PDE & 6 & 10 & 8 & 4 & 20 & 20 & 68 \\
			4 & cell averaging; Rannacher smoothing & 12 & 12 & 10 & 6 & 20 & 20 & 80\\
			5 & Lamperti grid; adaptive \(v_{\max}\) & 14 & 14 & 12 & 8 & 20 & 20 & 88 \\
			6 & short-maturity rule; adaptive COS domain & 10 & 16 & 16 & 16 & 20 & 20 & 98 \\
			7 & strike-coverage rule & 20 & 20 & 20 & 20 & 20 & 17 & \textbf{117} \\
			\bottomrule
		\end{tabular}
	\end{table}

\subsubsection{HHW model}\label{sec:meth_hhw}

The HHW model is given by
\begin{align*}
	\mathrm{d}S_t/S_t &= r_t\,\mathrm{d}t + \sqrt{v_t}\,\mathrm{d}W_1(t),\\
	\mathrm{d}v_t &= \kappa(\eta - v_t)\,\mathrm{d}t + \sigma_1\sqrt{v_t}\,\mathrm{d}W_2(t),\\
	\mathrm{d}r_t &= a(b - r_t)\,\mathrm{d}t + \sigma_2\,\mathrm{d}W_3(t),
\end{align*}
with correlations $\mathrm{d}[W_1,W_2]_t = \rho_{12}\,\mathrm{d}t$ and $\mathrm{d}[W_1,W_3]_t = \rho_{13}\,\mathrm{d}t$, while the short rate and variance are uncorrelated, where we require $\rho_{23}=0$ and $\rho_{12}^2 + \rho_{13}^2 \le 1$. The main obstacle is that the non-zero equity--rate correlation $\rho_{13}$ breaks the classical affine structure, so the model admits no closed-form characteristic function through the standard Riccati framework.

\paragraph{Phase 1: validating the generated Monte Carlo implementation.}

The generated implementation is a basic Euler Monte Carlo simulation, and the first two iterations correct several implementation defects. The short-rate drift is initialized independently of the market rate and the variance process is  partially truncated, both of which become apparent under degeneration to the Heston limit. The consistency checks and degeneration benchmark identify both deficiencies. Setting the initial short rate to $r_0=r$ and replacing the variance discretization by full truncation remove these defects, after which the simulation prices correctly to within its sampling noise.

The remaining deficiencies exhibit a different pattern. Further revisions improve neither the benchmark grade nor the computational efficiency in a meaningful way. The primary limitation no longer lies in the implementation but in the methodology itself. At the inverse-square-root convergence rate of Monte Carlo, reducing the implied-volatility deviation to the target level of $10^{-5}$ would require on the order of $10^{10}$ paths, which is incompatible with the efficient-pricing objective of the validation framework. The method, rather than its realization, has become the main limitation.

\paragraph{Phase 2: identifying a different methodology.}

A \texttt{[METHOD-\allowbreak CHOICE]} diagnosis is raised once the residual pattern can no longer be removed by correcting the code or tuning numerical controls within the Monte Carlo framework. The focus of the revision changes accordingly, from further refinement of the simulation to a pricing representation without sampling error.

As outlined above, the diagnosis gives rise to a transition away from Monte Carlo, after which variance-path conditioning and the Feynman–Kac theorem yield a conditional characteristic-function PDE representation. For HHW, this formulation emerges in the third iteration as the conditional characteristic-function PDE of Proposition~\ref{prop:hhw}, derived in Appendix~\ref{app:hhw}.

The transform is obtained by conditioning on the variance path. Because the variance and short-rate Brownian motions are uncorrelated, the log-forward is conditionally Gaussian given that path, and a linear change of variable in the variance makes the reduction exact. The conditional transform of the variance process satisfies, by the Feynman--Kac theorem, a one-dimensional backward parabolic equation. Solving this equation once per maturity yields the exact characteristic function, after which the COS valuation is applied unchanged. The \texttt{[METHOD-CHOICE]} score consequently rises to its maximum value.

\paragraph{Phase 3: validating and refining the new methodology.}

With the methodology established, the next iterations refine the implementation of the deterministic solver. Iteration~4 applies the inherited COS-method refinements in a single revision, bringing the default parity error to machine precision and reducing the test-suite violation count from $644$ to $119$. The total score increases from \(80\) to \(102\).

The subsequent iterations introduce the numerical machinery required for a transform obtained from a PDE solve. Each iteration addresses a deficiency exposed by the preceding one, and Table~\ref{tab:hhw_thirdphase} summarizes the resulting refinements and their quantitative effect.

Because the terminal data oscillates in the variance according to $\exp(\mathrm{i}u\rho_{12}v/\sigma_1)$, the variance grid must be wide enough to contain the relevant probability mass while being sufficiently fine to resolve the oscillation. Iteration~5 consequently places the far boundary several standard deviations above the variance mean,
\[
v_{\max}=m(T)+n_\sigma s(T),
\]
where $m(T)$ and $s(T)$ denote the mean and standard deviation of $v_T$. The boundary is closed by a Neumann condition consistent with the terminal value and the mesh width is linked to the oscillation frequency through the resolution condition
\[
\Delta v\lesssim \frac{\sigma_1}{|\rho_{12}|u_{\max}},
\]
where $u_{\max}$ is the largest Fourier argument appearing in the COS expansion. These modifications improve the degeneration benchmark from a failing grade to Level~B and reduce the convexity residual by more than two orders of magnitude.

For small vol-of-variance, the complex source term $c_{\mathrm{HHW}}$ of~\eqref{eq:hhw_pde_prop} develops a rapidly varying phase. Iteration~6 recenters  the imaginary phase along the variance mean path,
\[
m(t)=\mathbb{E}[v_t],
\]
thereby removing the divergence. The benchmark consequently attains Level~C.

Iteration~7 removes the kink introduced by the smooth put--call blending. The implementation instead prices the out-of-the-money option directly from the COS series and recovers the complementary price by the put--call parity, reducing the monotonicity and convexity residuals to nearly zero. Finally, iteration~8 addresses computational efficiency. The pricing problem is reduced to a single Crank--Nicolson solve per parameter set and maturity, batched across all COS modes in one Thomas sweep, lowering the per-maturity runtime from $548$ to $471$\,ms.

The final implementation reaches \(118/120\), with parity, monotonicity, and convexity violations reduced to machine precision and the benchmark reaching Level~C, corresponding to a maximum at-the-money implied-volatility deviation of $5.07\times10^{-5}$ across the maturity grid. The sole remaining \texttt{[EXPECTED]} residual is the pre-registered density-integral deviation under Feller violation ($\sim2\times10^{-3}$), which is retained as a representation limitation rather than an implementation defect.
	
	\begin{table}[H]
		\centering
		\revten{
			\caption{HHW third phase (iterations 5--8): principal refinements and their measured effect.}
			\label{tab:hhw_thirdphase}
			\small
			\renewcommand{\arraystretch}{1.6}
			\begin{tabular}{@{}c p{6.7cm} p{4.6cm}@{}}
				\toprule
				\textbf{Iter.} & \textbf{Principal refinement} & \textbf{Measured effect} \\
				\midrule
				5 & CIR-moment variance boundary; far-field Neumann condition; variance-resolution rule & benchmark $1.4\times10^{-2}$ (F)$\,\to\,2.1\times10^{-4}$ (B); convexity $L_3\;1.0\times10^{-3}\to1.1\times10^{-5}$ \\
				6 & phase-only recentering of the imaginary source along the variance mean path; dissipative real part preserved & benchmark $2.1\times10^{-4}$ (B)$\,\to\,5.7\times10^{-5}$ (C) \\
				7 & out-of-the-money valuation with put--call parity recovery, replacing the smooth put--call blending & monotonicity $L_2\;4.1\times10^{-5}\to2.1\times10^{-6}$; convexity $L_3\to0$ (blend kink removed) \\
				8 & one Crank--Nicolson solve per parameter set and maturity, batched across all COS modes in one Thomas sweep & per-maturity time $548\to471$\,ms; test-suite violations $\to16$; total $110\to118/120$ \\
				\bottomrule
		\end{tabular}}
	\end{table}
	
\subsubsection{G-SVJD model}\label{sec:meth_gsvjd}

The G-SVJD model \cite{fusari2025} is given by
\begin{equation*}
	\frac{\mathrm{d}S_t}{S_t}
	=
	(r-q-\lambda m_J)\,\mathrm{d}t
	+
	\sqrt{v_t}\,\mathrm{d}W_t
	+
	(J_t - 1)\,\mathrm{d}N_t,
	\qquad
	\mathrm{d}v_t
	=
	\kappa(\bar{\nu}-v_t)\,\mathrm{d}t
	+
	\sigma_v v_t^p\,\mathrm{d}Z_t,
\end{equation*}
with correlation $\mathrm{d}[W,Z]_t=\rho\,\mathrm{d}t$, parameter $p > 0$, log-normal jumps $\log J_j\sim\mathcal{N}(\mu_y,\sigma_y^2)$, and compensator $m_J=e^{\mu_y+\sigma_y^2/2}-1$. The main numerical challenge is the non-affine variance diffusion $\sigma_v v_t^p$.

\paragraph{Phase 1: exposing the structural weakness.}

The generated implementation is a full-Euler Monte Carlo simulation. The first iteration establishes the baseline Euler scheme, whose direct discretization of both the stock and variance paths introduces the usual bias and fails several consistency checks. Iteration~2 replaces the return discretization by the conditional-Gaussian variance-path integration derived in Appendix~\ref{app:gsvjd}. This makes put--call parity  to machine precision and removes the return-discretization error, so that the consistency hierarchy passes.

The remaining limitation is twofold and structural. Simulating the variance path still carries the full-truncation Euler bias associated with a variance process that can reach zero, producing an $O(\sqrt{\Delta t})$ error plateau that variance-reduction techniques cannot remove. Moreover, the Monte Carlo error decreases at the inverse-square-root rate, making accurate pricing computationally expensive. The degeneration benchmark consequently stalls at Level~A.

\paragraph{Phase 2: identifying a better method.}

The persistent error plateau gives rise to a \texttt{[METHOD-\allowbreak CHOICE]} diagnosis: the residual can no longer be removed within the Monte Carlo methodology, and the revision target changes from refining the Monte Carlo simulation to finding a pricing representation without sampling error.

For G-SVJD, the alternative representation additionally requires a change of variable.  In the third iteration, the conditional characteristic-function PDE of Proposition~\ref{prop:gsvjd} is obtained; its derivation is given in Appendix~\ref{app:gsvjd}. Conditioning on the variance path again leaves a conditionally Gaussian log-return, but the non-affine variance diffusion introduces an unstable imaginary advection term in the reduced equation. A change of variable absorbs the correlation coupling into the terminal value and restores a real, stable operator. The Feynman--Kac theorem then reduces the conditional transform to a one-dimensional backward equation. Monte Carlo sampling error and variance-path discretization bias are thereby replaced by the numerical error of a deterministic PDE solve. The \texttt{[METHOD-CHOICE]} score consequently rises to its maximum value.

\paragraph{Phase 3: guiding the refinement.}

With the methodology established, the subsequent iterations refine the deterministic solver. The new representation introduces an additional numerical complication: for $p>\tfrac12$, the change of variable that stabilizes the equation produces a drift term proportional to $v^{1/2-p}$, which diverges as the variance approaches zero. A naive discretization samples a singular drift and produces a density whose integral departs from unity.

The subsequent iterations address this defect and the remaining COS-method errors. Table~\ref{tab:gsvjd_thirdphase} summarizes the refinements and their measured effects. Iteration~4 replaces the singular drift by its average over the first variance cell and damps the nonsmooth terminal data through implicit-Euler (Rannacher) start-up steps, reducing the density-integral deviation from $1.56$ to $9.7\times10^{-2}$. Iteration~5 places the variance nodes uniformly in the Lamperti coordinate
\[
Y=\frac{v^{1-p}}{\sigma_v(1-p)},
\]
which transforms the power-law diffusion $\sigma_vv^p$ to a constant coefficient and clusters nodes near $v=0$, where the original diffusion degenerates. The variance domain is also sized from the moments of $v_T$, reducing the density-integral deviation to $1.0\times10^{-3}$ and bringing the degeneration benchmark below $10^{-4}$ at the longer maturities. Iteration~6 introduces a short-maturity step-count floor together with a jump-aware cosine half-width,
\[
s_x=\sqrt{|c_2|+\sqrt{c_4^{J}}},
\]
where $c_2$ and $c_4^{J}$ denote the diffusive and jump cumulants. This improves the benchmark from a failing grade to Level~C. Finally, iteration~7 introduces the strike-coverage rule and eliminates the short-maturity monotonicity artifact.

The final implementation reaches \(117/120\). Its sole residual concerns the jump-dominated density tails, which cannot be fully represented on a finite COS truncation domain and are retained as a representation limitation rather than an implementation defect.

\begin{table}[H]
	\centering
	\revten{
		\caption{G-SVJD third phase (iterations 4--7): principal refinements and their measured effects, in the format of Table~\ref{tab:hhw_thirdphase}.}
		\label{tab:gsvjd_thirdphase}
		\small
		\renewcommand{\arraystretch}{1.5}
		\begin{tabular}{@{}c p{6.7cm} p{4.6cm}@{}}
			\toprule
			\textbf{Iter.} & \textbf{Principal refinement} & \textbf{Measured effect} \\
			\midrule
			4 & cell average of the singular drift over the first variance cell; implicit-Euler (Rannacher) start-up for the nonsmooth terminal data & density integral $1.56\to9.7\times10^{-2}$; price $11.29\to9.64$ (reference $9.43$) \\
			5 & Lamperti-grid discretization, clustering nodes near $v=0$; variance domain sized from the moments of $v_T$ & density integral $9.7\times10^{-2}\to1.0\times10^{-3}$; degeneration benchmark below $10^{-4}$ for $T\ge\tfrac12$ \\
			6 & short-maturity step-count floor; jump-aware COS half-width & benchmark (F)$\,\to\,$Level~C ($6.3\times10^{-5}$) \\
			7 & strike-coverage rule extending the cosine interval to all requested log-strikes at every maturity & monotonicity $L_2\;21.4\to0$; total score \(117/120\) \\
			\bottomrule
		\end{tabular}}
\end{table}

HHW is validated first and develops the conditional characteristic-function PDE definition and its numerical solution rules before G-SVJD is considered. When G-SVJD undergoes the same methodological transition, these components are already available in the repository. The subsequent iterations can then focus on the additional numerical challenges introduced by the G-SVJD change of variable. Table~\ref{tab:gsvjd_inheritance} summarizes the resulting inheritance structure.

\begin{table}[H]
	\centering
	\caption{Repository components inherited by the G-SVJD run, together with their origin and the iteration in which they are first used.}
	\label{tab:gsvjd_inheritance}
	\footnotesize
	\begin{tabular}{@{}p{0.52\textwidth} p{0.40\textwidth}@{}}
		\toprule
		\textbf{Inherited component (source)} & \textbf{Role in G-SVJD (iteration)} \\
		\midrule
		Conditional characteristic-function PDE: variance-path conditioning and Feynman--Kac reduction to a one-dimensional PDE (HHW) & method transition (iteration~3) \\
		\addlinespace
		COS-method bundle: cumulant truncation, strike-coverage rule, resolution-\(N\) rule, and out-of-the-money pricing with smooth put--call blending (Heston) & COS valuation (iteration~3) \\
		\addlinespace
		Variance-domain sizing: far boundary determined from the moments of the variance process (HHW) & variance-grid set-up (iteration~5) \\
		\addlinespace
		Batched solver: one Crank--Nicolson sweep across all COS modes (HHW) & batched PDE solve (iteration~3) \\
		\addlinespace
		Jump-aware cosine half-width: domain widening through the jump fourth cumulant (Bates) & COS-domain set-up (iteration~6) \\
		\bottomrule
	\end{tabular}
\end{table}

The two methodology-development studies show that validation can identify limitations of the pricing methodology, rather than only defects in its implementation. In both cases, conditioning on the variance path followed by a Feynman--Kac reduction leads to a deterministic pricing framework that avoids Monte Carlo sampling error and variance-path discretization bias. The inheritance chain from Heston through HHW to G-SVJD allows the later studies to reuse previously developed numerical components and concentrate on model-specific difficulties. The  per-iteration revision logs for both models, along with the accuracy, benchmark results, computational costs, and per-iteration stress breakdowns underlying the diagnostic scores, are provided as Supplementary Material in the GitHub repository.
	
	\FloatBarrier
\subsection{Performance of the new methods}\label{sec:performance}

Figure~\ref{fig:speed_accuracy} compares, for both models at $T=0.5$, the mean implied-volatility error across the central strike grid against the per-maturity CPU time. Each implementation traces an accuracy--cost frontier obtained by varying a single numerical parameter: the Monte Carlo implementation by its number of simulated paths, the final CCF-PDE solver by its grid resolution, and the iteration-4 implementation by the variance-domain cap that limits its attainable accuracy. The reference is a converged CCF-PDE solution computed on a fine grid ($N_V=240$, $N_t=120$ for HHW; $N_V=200$ with $60$ time steps per year for G-SVJD), with the resolution increased until the prices  no longer change significantly. These reference values are computed with the same numerical method and serve to measure finite-resolution errors and construct the accuracy–cost frontiers.

\begin{figure}[t]
	\centering
	\includegraphics[width=0.49\linewidth]{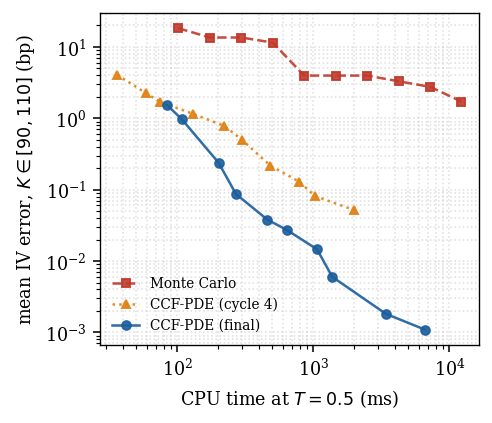}\hfill
	\includegraphics[width=0.49\linewidth]{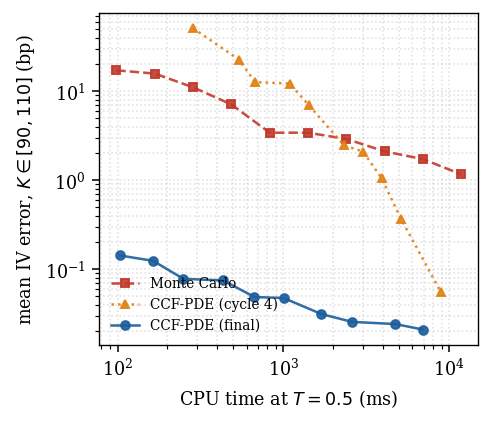}
	\caption{\revten{Speed and accuracy at $T=0.5$ for HHW (left) and G-SVJD (right): mean implied-volatility error across the central strike grid $K\in[90,110]$ against per-maturity CPU time, timed single-threaded on one virtualized Intel Xeon core at $2.80$\,GHz}. Each method is represented by its accuracy--cost frontier obtained from a common geometric sequence of refinements, with the converged fine-grid CCF-PDE solution taken as reference. Monte Carlo is varied over its antithetic path count at a fixed Euler time step, the final CCF-PDE solver over its grid resolution, and the iteration-4 implementation over its variance-domain cap. Both panels use the base configuration $S_0=100$, $v_0=0.04$, $q=0$: for HHW $\kappa=2$, $\eta=0.04$, $\sigma_1=0.3$, $\rho_{12}=-0.7$, $\rho_{13}=0.3$, $a=0.5$, $\sigma_2=0.02$, with flat initial rate $r_0=0.02$; for G-SVJD $\kappa=2$, $\bar\nu=0.04$, $\sigma_v=0.3$, $p=0.7$, $\rho=-0.7$, $\lambda=0.1$, $\mu_y=-0.10$, $\sigma_y=0.15$, with $r=0.02$.}
	\label{fig:speed_accuracy}
\end{figure}

Figure~\ref{fig:speed_accuracy} shows the performance gains resulting from the methodological developments of Section~\ref{sec:meth_dev}. For both HHW and G-SVJD, the final CCF-PDE formulation achieves sub-basis-point implied-volatility errors in a few hundred milliseconds and delivers errors between one and three orders of magnitude smaller than those of the Monte Carlo implementation at comparable computational cost.

The Monte Carlo curve initially improves at the expected inverse-square-root rate as the number of simulated paths increases. Eventually, however, both models reach an error plateau. The remaining error is no longer statistical but is dominated by the time-discretization bias associated with simulating a variance process that can approach zero. Additional paths therefore reduce the variance of the estimator but have little effect on the total pricing error.
The iteration-4 implementation occupies an intermediate position between Monte Carlo and the final CCF-PDE solver and illustrates the role of validation-guided refinement. Immediately after the transition to the conditional characteristic-function PDE, the pricing methodology is already more accurate than Monte Carlo, but its implementation still contains numerical deficiencies that prevent it from attaining its full accuracy.

For HHW, the dominant limitation is an overly restrictive truncation of the variance domain. The iteration-4 implementation employs the fixed cap
\[
v_{\max}=3\max(v_0,\eta),
\]
which excludes a non-negligible portion of the variance distribution and leaves an accuracy floor of approximately four basis points. Refining the variance or strike grids cannot remove this error, because the main source of inaccuracy is the placement of the boundary. Iteration~5 resolves this deficiency by sizing the variance domain from the moments of the variance process, producing a substantially more accurate solution at essentially unchanged computational cost.

For G-SVJD, the dominant limitation is instead the singular behaviour of the transformed drift near $v=0$. The Lamperti coordinate introduced in iteration~5 clusters grid points in this region, while the variance moments again determine an appropriate far boundary. Once these modifications are introduced, the residual error floor disappears and the accuracy--cost curve moves down to the level of the final solver.

The two studies show that replacing an inadequate pricing methodology can improve the accuracy--cost trade-off by several orders of magnitude. They also show that the methodological change is the first step: the new deterministic framework still requires numerical validation and refinement. The final curves in Figure~\ref{fig:speed_accuracy} reflect the combined effect of method selection and subsequent implementation refinement.
	
\section{Ablation and reproducibility}\label{sec:ablation}
This section presents the baseline comparison and reproducibility experiments that isolate the contribution of the validation operator.  We investigate two further questions: how much of the observed refinement is attributable to the validation operator, and does RIDGE require the implementation generator and diagnosis operator to use the same language model?

We design three experimental settings. The experiments of Section~\ref{sec:experiments}, where the pricing implementation and performs the diagnosis within RIDGE share the same language model, are referred to below as Experiment~2. Experiment~1 removes the validation operator. The language model generates a Heston~\cite{heston1993} pricing implementation from the model specification and the COS method~\cite{fang2008}, and revises it iteratively using code execution and comparison with a single published reference value. No diagnosis stage, consistency checks, or stress testing is available.  Experiment~3 retains the validation operator but uses an external language model as the implementation generator. At each iteration, this model receives the current blueprint and diagnostic report and generates the revised blueprint and pricing implementation, while Claude Opus~4.8 performs the validation, diagnosis, and repository updates. Two external generators are considered in the Experiment~3 runs: DeepSeek (deepseek-v4-pro) and ChatGPT (gpt-5.5).

Experiment 1 represents a standard LLM coding agent as the baseline, while Experiment 2 adds the RIDGE-specific validation layer. Furthermore, Experiment~3 tests whether the same validation procedure remains effective when implementation generation and diagnosis are assigned to different language models.
Table~\ref{tab:ablation_states} records, for each Heston run, the constructions present in the final implementation and the version (Experiment~1) or iteration (Experiments~2 and~3) in which each first appears. The upper block contains a mathematical core shared by all three runs. The lower block contains regime-adaptive refinements that relate the truncation interval and mode count to the selected strikes, maturity, and Feller ratio $f$.

\begin{table}[H]
	\centering
	\caption{Numerical procedures appearing in the three Heston runs and the version (Experiment~1) or iteration (Experiments~2 and~3) in which each first appears. A dash indicates that a procedure is not introduced; a range indicates development over consecutive iterations. The upper block lists procedures shared by all runs and the lower block those developed in the operator-guided runs.}
	\label{tab:ablation_states}
	\footnotesize
	\setlength{\tabcolsep}{6pt}
	\begin{tabular}{lccc}
		\toprule
		\textbf{Construction} & \textbf{Exp.\ 1 version} & \textbf{Exp.\ 2 iteration} & \textbf{Exp.\ 3 iteration} \\
		\midrule
		\multicolumn{4}{l}{\textit{Shared by all runs}} \\
		Analytic characteristic function and COS valuation & 1 & 1 & 1 \\
		Cumulant-based truncation domain & 1 & 2 & 1 \\
		General-endpoint payoff coefficients & 1 & 2 & 1 \\
		Put pricing via put--call parity & 2 & 3 & 4 \\
		Strike vectorization & 3 & 4 & 1 \\
		\midrule
		\multicolumn{4}{l}{\textit{Developed only in the operator-guided runs}} \\
		Strike-coverage rule & -- & 3 & 3 \\
		Resolution-scaled mode count & -- & 4 & 3 \\
		Out-of-the-money pricing & -- & 6--7 & 4 \\
		Feller-aware domain widening & -- & 8--9 & 5 \\
		Feller floor for the mode count & -- & 8--9 & 5 \\
		Smooth put--call blending & -- & 10 & -- \\
		\bottomrule
	\end{tabular}
\end{table}

The difference between the three runs is pronounced. Experiment~1 stops after three versions with only the shared mathematical core. None of the adaptive frameworks is introduced. Comparison with a single reference value provides no evidence that the fixed truncation interval and mode count fail outside the default configuration. The deficiencies exposed by the stress configurations in Experiment~2 thus remain undetected.

Both operator-guided runs, Experiments~2 and~3, develop the adaptive layer. This remains the case when the implementation generator is external. In the Heston run of Experiment~3, the validation operator identifies a fatal implementation error in the first iteration and subsequently detects overflow caused by heavy tails on the high-volatility and violated-Feller stress sets. The resulting diagnoses lead to adaptive resolution, strike coverage, out-of-the-money pricing, and Feller-aware truncation rules. After these revisions, the monotonicity and convexity checks are satisfied across all stress sets, and the final implementation reaches the \textit{halt} state.
	
The adaptive layer emerges as a consequence of the validation operator. It is developed both when generation and diagnosis use the same language model and when the implementation generator is external, but is absent when the validation operator is removed.

Table~\ref{tab:ablation_engines} compares the three final implementations on the default configuration and on representative extremes of the stress parameters, using the common measurement protocol of Section~\ref{sec:iteration}. The comparison evaluates whether the adaptive settings identified in Table~\ref{tab:ablation_states} translate into improved admissibility and pricing accuracy outside the configuration used by the unguided run.

\begin{table}[H]
	\centering
	\caption{The three final Heston implementations---the unguided run (Exp.\ 1), the full RIDGE run (Exp.\ 2), and the external-generator run (Exp.\ 3)---evaluated on the default configuration and a representative extreme of each stress parameter. $L_2$ and $L_3$ are the percentages of monotonicity and convexity checks that fail; $L_4$ is the mean density-integral deviation. The error columns report the mean relative price error against the reference over $|m|\le0.25$ and $|m|\le0.5$, with $m=\log(K/F)$. Time is measured per maturity, and the abnormal set is $\kappa=0.2$, $\theta=0.004$, $\sigma_v=0.9$.}
	\label{tab:ablation_engines}
	\scriptsize
	\setlength{\tabcolsep}{3pt}
	\begin{tabular}{cclccc cc r}
		\toprule
		\textbf{Set} & \textbf{$f$} & \textbf{Implementation} & \textbf{$L_2$ (\%)} & \textbf{$L_3$ (\%)} & \textbf{$L_4$} & \textbf{err $|m|\!\le\!0.25$} & \textbf{err $|m|\!\le\!0.5$} & \textbf{Time, ms} \\
		\midrule
		\multirow{3}{*}{Default} & \multirow{3}{*}{$1.78$} & Exp.\ 1 & $8.8$ & $2$ & $8.8\!\times\!10^{-5}$ & $5.7\!\times\!10^{-2}$ & $0.106$ & 0.3 \\
		&  & Exp.\ 2 & $0$ & $0$ & $8.8\!\times\!10^{-5}$ & $5.0\!\times\!10^{-7}$ & $5.9\!\times\!10^{-7}$ & 0.6 \\
		&  & Exp.\ 3 & $0$ & $0$ & $8.0\!\times\!10^{-5}$ & $9.5\!\times\!10^{-7}$ & $5.3\!\times\!10^{-6}$ & 1.8 \\
		\midrule
		\multirow{3}{*}{$\kappa=0.1$} & \multirow{3}{*}{$0.0889$} & Exp.\ 1 & $9.2$ & $1$ & $3.3\!\times\!10^{-3}$ & $5.1\!\times\!10^{-2}$ & $0.107$ & 0.2 \\
		&  & Exp.\ 2 & $0$ & $0$ & $2.7\!\times\!10^{-3}$ & $3.8\!\times\!10^{-4}$ & $1.6\!\times\!10^{-3}$ & 1.4 \\
		&  & Exp.\ 3 & $0$ & $0$ & $3.1\!\times\!10^{-3}$ & $1.4\!\times\!10^{-6}$ & $6.3\!\times\!10^{-6}$ & 1.2 \\
		\midrule
		\multirow{3}{*}{$\theta=0.01$} & \multirow{3}{*}{$0.444$} & Exp.\ 1 & $9.3$ & $2.1$ & $3.6\!\times\!10^{-4}$ & $5.9\!\times\!10^{-2}$ & $0.112$ & 0.3 \\
		&  & Exp.\ 2 & $0$ & $0$ & $3.6\!\times\!10^{-4}$ & $1.9\!\times\!10^{-10}$ & $4.2\!\times\!10^{-9}$ & 1.3 \\
		&  & Exp.\ 3 & $0$ & $0$ & $3.6\!\times\!10^{-4}$ & $1.5\!\times\!10^{-6}$ & $2.5\!\times\!10^{-6}$ & 1.8 \\
		\midrule
		\multirow{3}{*}{$\sigma_v=1$} & \multirow{3}{*}{$0.160$} & Exp.\ 1 & $9.4$ & $2.1$ & $4.3\!\times\!10^{-3}$ & $6.0\!\times\!10^{-2}$ & $0.108$ & 0.3 \\
		&  & Exp.\ 2 & $0$ & $0$ & $4.3\!\times\!10^{-3}$ & $7.0\!\times\!10^{-7}$ & $1.7\!\times\!10^{-6}$ & 1.4 \\
		&  & Exp.\ 3 & $0$ & $0$ & $4.3\!\times\!10^{-3}$ & $7.0\!\times\!10^{-5}$ & $1.2\!\times\!10^{-4}$ & 1.3 \\
		\midrule
		\multirow{3}{*}{$\rho=-0.95$} & \multirow{3}{*}{$1.78$} & Exp.\ 1 & $9.2$ & $2.1$ & $1.8\!\times\!10^{-4}$ & $5.9\!\times\!10^{-2}$ & $0.11$ & 0.3 \\
		&  & Exp.\ 2 & $0$ & $0$ & $1.8\!\times\!10^{-4}$ & $1.8\!\times\!10^{-6}$ & $2.3\!\times\!10^{-6}$ & 0.6 \\
		&  & Exp.\ 3 & $0$ & $0$ & $8.9\!\times\!10^{-5}$ & $3.0\!\times\!10^{-3}$ & $1.9\!\times\!10^{-3}$ & 0.9 \\
		\midrule
		\multirow{3}{*}{$v_0=0.01$} & \multirow{3}{*}{$1.78$} & Exp.\ 1 & $13.2$ & $1.2$ & $1.4\!\times\!10^{-4}$ & $0.124$ & $0.18$ & 0.3 \\
		&  & Exp.\ 2 & $0$ & $0$ & $1.4\!\times\!10^{-4}$ & $1.9\!\times\!10^{-6}$ & $1.7\!\times\!10^{-6}$ & 0.8 \\
		&  & Exp.\ 3 & $0$ & $0$ & $1.2\!\times\!10^{-4}$ & $3.7\!\times\!10^{-6}$ & $5.2\!\times\!10^{-6}$ & 1.6 \\
		\midrule
		\multirow{3}{*}{Abnormal} & \multirow{3}{*}{$0.00198$} & Exp.\ 1 & $14.1$ & $11.1$ & $1.33$ & $0.127$ & $0.178$ & 0.3 \\
		&  & Exp.\ 2 & $3.6$ & $0$ & $3.7\!\times\!10^{-2}$ & $7.4\!\times\!10^{-3}$ & $1.5\!\times\!10^{-2}$ & 1.5 \\
		&  & Exp.\ 3 & $0$ & $0$ & $1.1\!\times\!10^{-2}$ & $1.5\!\times\!10^{-4}$ & $1.6\!\times\!10^{-4}$ & 1.3 \\
		\bottomrule
	\end{tabular}
\end{table}

The unguided implementation differs markedly from the two operator-guided implementations. It violates strike monotonicity on approximately nine to fourteen percent of adjacent-strike pairs and develops additional convexity violations on the abnormal set. Its mean relative pricing error is already several percent over $|m|\le0.25$ and increases to between ten and twenty percent over the wider range $|m|\le0.5$, which extends beyond the effective coverage of its fixed truncation domain.

The two operator-guided implementations, by contrast, exhibit no monotonicity or convexity violations on the default and individual stress configurations. Their relative pricing errors generally range from a few parts in $10^{-6}$ to $10^{-3}$, while a refinement study identifies the small density-integral residuals as a discretization floor rather than an implementation deficiency. The remaining deviations are concentrated on the pre-registered abnormal configuration at long maturity. There, the heavy-tailed density and extreme out-of-the-money strikes cannot be represented exactly on a finite COS truncation domain. The corresponding density-integral deviations are $3.7\times10^{-2}$ for Experiment~2 and $1.1\times10^{-2}$ for Experiment~3 and are classified as \texttt{[EXPECTED]}.
	
The comparison of the three experiments identifies the contribution of the validation operator. Refinement against a single benchmark configuration, as in Experiment~1, can produce an implementation that prices that configuration accurately and appears locally consistent, yet still exhibits substantial pricing errors and arbitrage violations when evaluated in other parameter regimes and over wider strike ranges. The broader evidence supplied by the validation operator leads to the regime-adaptive settings and robustness observed in Experiments~2 and~3.

Beyond the Heston ablation study, Experiment~3, the external-generator reproducibility experiment, was extended to three additional model classes and two external language models.

DeepSeek (deepseek-v4-pro) served as the generator for the Bates and rough Heston studies. Bates reached a validated implementation in five iterations, matching the corresponding Experiment~2 replication run, with a degeneration-to-Heston benchmark of $7.8\times10^{-9}$ and a runtime of $0.35$\,ms per maturity. Rough Heston reached a validated implementation in nine iterations, compared with six iterations in Experiment~2. The additional revisions were devoted mainly to performance vectorization and coverage refinement. Its final implementation attains a degeneration benchmark of $5.6\times10^{-5}$ at $114$\,ms per maturity, with accuracy comparable to Experiment~2 at a moderately higher computational cost.

For G-SVJD, ChatGPT (gpt-5.5) served as the generator. Unlike the HHW study, in which the external generator independently reproduced the conditional characteristic-function PDE of Experiment~2, the G-SVJD study arrived at a different pricing methodology: a hybrid Fourier finite-difference solver that transforms the log-price dimension and solves a one-dimensional complex-valued PDE for each Fourier frequency. The final implementation reaches a validated state in four iterations and reproduces the Heston degeneration ($p=\tfrac12$, $\lambda=0$) to within $10^{-5}$, with a runtime of approximately $360$\,ms per maturity on the non-abnormal configurations. Its higher computational cost reflects the convergence properties of the alternative methodology rather than an implementation deficiency.

These additional studies extend the Heston ablation and reproducibility results beyond a single model and generator. With DeepSeek and ChatGPT as external generators, RIDGE reaches validated implementations for affine jump-diffusion, rough-volatility, and non-affine models. The external generator may reproduce the methodology obtained in Experiment~2 or arrive at a different numerical techniques, as in the G-SVJD study. In either case, the validation operator evaluates the implementation through the same deterministic measurements, identifies the remaining deficiencies, and directs subsequent revisions. The results suggest that the validation process, rather than the choice of implementation generator, is the main source of the observed regime adaptivity and final validation quality.
	
\section{Discussion and conclusion}\label{sec:discussion}

\subsection{Discussion}

Across the five models, the main failures are not programming mistakes and rarely involve mathematical errors. Two broader numerical deficiencies recur.

The first is insufficient adaptivity. Numerical controls that appear adequate for a default configuration often fail across maturities, moneyness levels, and stressed parameter regimes. A fixed step count for the rough-Heston Volterra solver is no more reliable than a fixed mode count for the cosine expansion. In each study, adaptive rules emerge in response to measured violations rather than being prescribed in advance. Once incorporated into the repository, several of these rules transfer to subsequent model classes and provide a stronger starting point for later validation studies.

The second deficiency is a mismatch between the numerical methodology and the structure of the model. This occurs in the HHW and G-SVJD studies. Their generated Monte Carlo implementations can be made internally consistent, but their convergence properties remain incompatible with the accuracy and efficiency objectives of the validation process. In these cases, further implementation refinement is insufficient and the methodology itself becomes the object of diagnosis.

The HHW and G-SVJD studies illustrate a broader role for validation in numerical method development. In both cases, the persistent error pattern leads to a \texttt{[METHOD-\allowbreak CHOICE]} diagnosis and a transition to a deterministic pricing framework. Conditioning on the variance path renders the log-return conditionally Gaussian; an appropriate transformation followed by the Feynman--Kac theorem then reduces valuation to a one-dimensional backward PDE in the variance variable; see Appendices~\ref{app:hhw} and~\ref{app:gsvjd}. The resulting techniques avoid Monte Carlo sampling error and the discretization bias associated with direct simulation of a variance process near zero, without introducing an affine approximation of the original model. More generally, these studies show how persistent validation failures can provide evidence that the numerical representation, rather than its implementation, should be reconsidered.

A further observation concerns benchmarking. No single reference mechanism is suitable across all model classes. Published numerical values provide the strongest direct evidence when available, while degeneration limits are particularly informative when a complex model reduces to a well-validated simpler one. For models lacking either form of reference, local affine approximations and consistency under numerical refinement provide weaker but still useful evidence. The benchmark hierarchy must therefore adapt to the analytical structure of the model and to the reference information available.

Stress testing is equally important. Several substantial deficiencies remain invisible on the default configuration and appear only under deep Feller violation, far-from-stationary variance regimes, or extreme jump parameters. Combined stresses can reveal failures that are absent along every individual stress direction. These measurements motivate several of the adaptive settings retained in the repository, including Feller-aware domain widening and jump-aware truncation rules. Residual deviations that remain after validation occur on pre-registered abnormal configurations and are reported under \texttt{[EXPECTED]}, together with their numerical or theoretical explanation, rather than being removed through further parameter tuning.

Taken together, the five case studies suggest that autonomous validation has a broader role than detecting coding errors. It can expose insufficient numerical adaptation, identify a mismatch between a model and its pricing methodology, guide the development and refinement of alternative numerical methods, and accumulate validation experience for subsequent model classes.
	
\subsection{Conclusion}

We have presented RIDGE, an autonomous validation framework for LLM-generated option-pricing implementations, organized around a methodological blueprint, a pricing implementation, and a quantitative validation operator supported by a repository of accumulated validation expertise.

For the Heston, Bates, and rough Heston models, RIDGE validates whether generated implementations realize documented pricing methodologies and progressively accumulates adaptive numerical techniques that can be reused in subsequent studies. For the Heston--Hull--White and G-SVJD models, the validation process goes further: it identifies structural limitations of the generated Monte Carlo methodologies and guides the transition to deterministic conditional characteristic-function PDE formulations, which are subsequently refined through the same validation process.

The ablation experiments identify the validation operator as the central component of the framework. Without it, an implementation may appear accurate on the configuration against which it is refined while exhibiting substantial pricing errors and arbitrage violations elsewhere. The external-generator experiments show, conversely, that RIDGE can validate and refine implementations produced by different language models. The resulting pricing methodology need not coincide with that obtained in the original experiments; nevertheless, the same measurement and diagnosis process can guide independently generated implementations to a validated state.
Taken together, the experiments suggest that autonomous validation has a broader role than detecting implementation defects. It can reveal insufficient numerical adaptivity, recognize when a pricing methodology is unsuitable, retain transferable numerical expertise across model classes, and guide the development of alternative numerical techniques. As large language models become increasingly capable of generating scientific software from mathematical specifications, systematic validation frameworks of this kind may provide an important means of assessing the reliability of LLM-generated computational methods.
The stochastic-process interpretation of Section~\ref{ssec:stochastic} gives sufficient conditions for almost-sure convergence of the validation energy to the acceptance region. Whether these conditions hold for specific classes of language-model generators, how they translate into finite-time guarantees, and how stable the validation process remains across independently initialized runs are open questions.
	\section*{Acknowledgements}
Xue Cheng was supported by the Natural Science Foundation of China under Grant No.~11971040. Liexin Cheng gratefully acknowledges financial support from the China Scholarship Council. We would also like to thank Prof. Lech Grzelak for valuable suggestions.
	
	\bibliographystyle{unsrt}
	\bibliography{refs}
	
	\appendix
	\makeatletter
	\renewcommand\@seccntformat[1]{Appendix~\csname the#1\endcsname.\quad}
	\makeatother
	\section{The affine-proxy benchmark}\label{app:affine}
	
	\revseven{When Block~III as described in Section  \S\ref{sec:iteration} has \reveight{neither published reference prices nor a degeneration limit}, it falls back on a local affine proxy. Starting from the general variance dynamics $\mathrm{d}v_t = \mu_v(v_t)\,\mathrm{d}t + \sigma_v(v_t)\,\mathrm{d}Z_t$, the drift $\mu_v(v)$ and the squared diffusion coefficient $\sigma_v(v)^2$ are linearized around the initial variance level $v_0$,
		$
			\mu_v(v) \approx \alpha_0 + \alpha_1 v,\qquad \sigma_v(v)^2 \approx \beta_0 + \beta_1 v,
		$
		with affine coefficients from first-order Taylor expansions,
		$
			\alpha_0 = \mu_v(v_0) - \mu_v'(v_0)v_0,\quad \alpha_1 = \mu_v'(v_0),\qquad
			\beta_0 = \sigma_v(v_0)^2 - (\sigma_v^2)'(v_0)v_0,\quad \beta_1 = (\sigma_v^2)'(v_0).
		$
		The resulting proxy has affine variance drift and affine squared volatility-of-variance dynamics while inheriting the jump and correlation structure of the original model, so it admits a characteristic-function representation through the standard Riccati-equation framework of affine stochastic volatility theory~\cite{duffie2000}, and European option prices follow by the COS method. Because the linearization is local around $v_0$, the proxy is an accurate benchmark at short maturities, where the at-the-money implied volatility and skew of the implementation converge to those of the proxy.}
	
	\section{HHW: development and pricing algorithm}\label{app:hhw}
	
\revten{This appendix derives the characteristic function of the log-return under the HHW model and records the final iteration pricing algorithm.}
The HHW dynamics have been presented in Section~\ref{sec:meth_hhw}.

\rev{\begin{proposition}[HHW CCF-PDE]\label{prop:hhw}
Let $Q^T$ be the $T$-forward measure and $F_t=S_t/P(t,T)$ the forward price, with $F_T=S_T$. The zero-coupon bond $P(0,T)$ is given by~\eqref{eq:hhw_bond} below, and
\[
B_r(t,T)=\frac{1-e^{-a(T-t)}}{a}.
\]

Define \(A_{\mathrm{HW}}(u;T)=-\tfrac12(u^2+\mathrm{i}u)\,\sigma_2^2\int_0^T B_r(t,T)^2\,\mathrm{d}t.\) The characteristic function of $\log (F_T/F_0)$ is given by
\begin{equation}
\varphi(u;T)=\exp(-\mathrm{i}u\rho_{12}v_0/\sigma_1)\,h(0,v_0;u)\,\exp(A_{\mathrm{HW}}(u;T)),
\label{eq:hhw_cf}
\end{equation}
where \(h(t, v; u)\) is the solution to the one-dimensional backward parabolic equation
\begin{equation}
-\partial_t h = \kappa(\eta-v)\,\partial_v h + \tfrac12\sigma_1^2v\,\partial_{vv}h + c_{\mathrm{HHW}}(t,v;u)\,h,
\qquad h(T,v;u)=\exp(\mathrm{i}u\rho_{12}v/\sigma_1),
\label{eq:hhw_pde_prop}
\end{equation}
with 
\begin{equation}
c_{\mathrm{HHW}}(t,v;u)=-\tfrac12u^2(1-\rho_{12}^2)v
-\tfrac12\mathrm{i}u\,v
-\mathrm{i}u\,\frac{\rho_{12}\kappa(\eta-v)}{\sigma_1}
-\mathrm{i}u\,\rho_{13}\sigma_2 B_r(t,T)\sqrt{v}
-u^2\rho_{13}\sigma_2 B_r(t,T)\sqrt{v}.
\label{eq:hhw_source}
\end{equation}

\end{proposition}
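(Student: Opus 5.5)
The plan is to work under the $T$-forward measure $Q^T$, make the log-forward conditionally Gaussian given the variance path, and then express the remaining expectation over that path as a Feynman--Kac solution.

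\emph{Forward dynamics.} Under Hull--White the bond satisfies $\mathrm{d}P(t,T)/P(t,T)=r_t\,\mathrm{d}t-\sigma_2B_r(t,T)\,\mathrm{d}W_3(t)$. Hence the forward $F_t=S_t/P(t,T)$ is a $Q^T$-martingale with
\[
\mathrm{d}F_t/F_t=\sqrt{v_t}\,\mathrm{d}W_1^T(t)+\sigma_2B_r(t,T)\,\mathrm{d}W_3^T(t).
\]
The Girsanov kernel moving from $\mathbb{Q}$ to $Q^T$ acts only along $W_3$. Because $\rho_{23}=0$, the drift of $v$ is unchanged, and $W_2$ remains a $Q^T$-Brownian motion independent of $W_3^T$. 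Itô's formula then gives
\[
\log(F_T/F_0)=\int_0^T\sqrt{v_t}\,\mathrm{d}W_1^T+\int_0^T\sigma_2B_r\,\mathrm{d}W_3^T-\tfrac12\int_0^T\bigl(v_t+2\rho_{13}\sigma_2B_r\sqrt{v_t}+\sigma_2^2B_r^2\bigr)\mathrm{d}t .
\]

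\emph{Conditioning on the variance path.} Next I decompose
\[
W_1^T=\rho_{12}W_2+\rho_{13}W_3^T+\sqrt{1-\rho_{12}^2-\rho_{13}^2}\,W_\perp,
\]
with $W_2,W_3^T,W_\perp$ independent. This is possible precisely because $\rho_{23}=0$ and $\rho_{12}^2+\rho_{13}^2\le1$. The $W_2$-integral is eliminated using the variance SDE:
\[
\int_0^T\sqrt{v_t}\,\mathrm{d}W_2=\frac{1}{\sigma_1}\Bigl(v_T-v_0-\int_0^T\kappa(\eta-v_t)\,\mathrm{d}t\Bigr).
\]
Conditional on $\mathcal{F}^{W_2}_T$, the remaining stochastic integrals against $W_3^T$ and $W_\perp$ are Gaussian. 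Their total conditional variance is
\[
\int_0^T\bigl[(1-\rho_{12}^2)v_t+2\rho_{13}\sigma_2B_r\sqrt{v_t}+\sigma_2^2B_r^2\bigr]\mathrm{d}t,
\]
obtained from the integrand $(\rho_{13}\sqrt v+\sigma_2B_r)$ along $W_3^T$ plus the $W_\perp$ part. The conditional characteristic function is then $\exp\bigl(\mathrm{i}u\cdot\text{mean}-\tfrac12u^2\cdot\text{variance}\bigr)$. Collecting the terms gives three groups. The deterministic $\sigma_2^2B_r^2$ terms, arising with coefficient $-\tfrac12(\mathrm{i}u+u^2)$, integrate to exactly $A_{\mathrm{HW}}(u;T)$. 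The $v_T$ and $v_0$ terms give $\exp(\mathrm{i}u\rho_{12}v_T/\sigma_1)\exp(-\mathrm{i}u\rho_{12}v_0/\sigma_1)$. Everything else is $\exp\int_0^Tc_{\mathrm{HHW}}(t,v_t;u)\,\mathrm{d}t$, with $c_{\mathrm{HHW}}$ as in \eqref{eq:hhw_source}. This bookkeeping is routine, but each term should be matched to \eqref{eq:hhw_source}.

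\emph{Feynman--Kac.} By the tower property,
\[
\varphi(u;T)=e^{-\mathrm{i}u\rho_{12}v_0/\sigma_1}e^{A_{\mathrm{HW}}}\;\mathbb{E}\Bigl[e^{\mathrm{i}u\rho_{12}v_T/\sigma_1}e^{\int_0^Tc_{\mathrm{HHW}}\mathrm{d}t}\Bigr].
\]
Define $h(t,v;u)$ as the analogous expectation started from $v_t=v$. The Feynman--Kac theorem for the CIR generator with potential $c_{\mathrm{HHW}}$ then yields \eqref{eq:hhw_pde_prop} together with its terminal condition, and evaluating at $(0,v_0)$ gives \eqref{eq:hhw_cf}.

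\emph{Main obstacle.} The hardest step is the rigorous justification of this Feynman--Kac step. The potential is complex and contains $\sqrt v$, and the CIR diffusion coefficient is non-Lipschitz at $v=0$, so standard Feynman--Kac theorems do not apply directly. I would first establish integrability directly. The combined exponent has modulus at most one, because its real part equals minus one half of $u^2$ times a nonnegative conditional variance. Therefore the expectation defining $h$ is bounded, and $s\mapsto h(s,v_s)\exp\int_t^sc\,\mathrm{d}\tau$ is a bounded local martingale, hence a true martingale. I would then argue that $h$ is the unique bounded classical solution, using verification via Itô's formula on $(0,\infty)$, where the coefficients are locally smooth. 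The boundary at $v=0$ requires no condition because the equation degenerates there.
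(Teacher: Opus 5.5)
Your proposal is correct and takes the same route as the paper's proof: change to the $T$-forward measure (with $\rho_{23}=0$ leaving $W_2$ and the CIR dynamics unchanged), remove the $\rho_{12}$-coupled integral with the linear gauge $v/\sigma_1$, take the conditional Gaussian characteristic function given the variance path, and invoke Feynman--Kac. Your orthogonal split $W_1^T=\rho_{12}W_2+\rho_{13}W_3^T+\sqrt{1-\rho_{12}^2-\rho_{13}^2}\,W_\perp$ is only a cosmetic variant of the paper's $W_1^T=\rho_{12}W_2+\sqrt{1-\rho_{12}^2}\,W^\perp$ (with $W^\perp$ correlated to $W_3^T$) and gives the same conditional variance; one small correction to your extra integrability remark is that once $A_{\mathrm{HW}}$ is factored out, $\operatorname{Re}c_{\mathrm{HHW}}$ can be positive when $\rho_{13}<0$, so $|h(t,v;u)|$ is bounded by $\exp\bigl(\tfrac12u^2\sigma_2^2\int_t^TB_r(s,T)^2\,\mathrm{d}s\bigr)$ rather than by one (using $(1-\rho_{12}^2)v+2\rho_{13}\sigma_2B_r\sqrt v+\sigma_2^2B_r^2\ge(\rho_{13}\sqrt v+\sigma_2B_r)^2$), which still gives the boundedness you need.
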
}

\begin{proof}
\revten{
We work under the $T$-forward measure $Q^T$, with the zero-coupon bond $P(t,T)$ as numeraire. The forward
\[
F_t=\frac{S_t}{P(t,T)}
\]
is a martingale with $F_T=S_T$. Because the Hull--White short rate makes $\int_t^T r_s\,\mathrm{d}s$ Gaussian, the bond is
\begin{equation*}
P(t,T)=\exp\!\bigl(A_{\mathrm{HW}}^P(t,T)-B_r(t,T)r_t\bigr),
\label{eq:hhw_bond}
\end{equation*}
with
\begin{equation*}
B_r(t,T)=\frac{1-e^{-a(T-t)}}{a},\qquad
A_{\mathrm{HW}}^P(t,T)=\Bigl(b-\frac{\sigma_2^2}{2a^2}\Bigr)(B_r+t-T)-\frac{\sigma_2^2B_r^2}{4a}.
\end{equation*}

The forward dynamics under $Q^T$ are obtained by changing numeraire. The Girsanov kernel $-\sigma_2B_r\,\mathrm{d}W_3$ shifts
\[
\mathrm{d}W_1^T(t)=\mathrm{d}W_1(t)+\rho_{13}\sigma_2B_r(t,T)\,\mathrm{d}t,\qquad
\mathrm{d}W_3^T(t)=\mathrm{d}W_3(t)+\sigma_2B_r(t,T)\,\mathrm{d}t,
\]
so that
\begin{equation*}
\frac{\mathrm{d}F_t}{F_t}=\sqrt{v_t}\,\mathrm{d}W_1^T(t)+\sigma_2B_r(t,T)\,\mathrm{d}W_3^T(t),
\end{equation*}
with $W_1^T,W_3^T$ standard $Q^T$-Brownian motions and $\mathrm{d}[W_1^T,W_3^T]_t=\rho_{13}\,\mathrm{d}t$.

Conditioning on the variance path $\{v_t\}_{0\le t\le T}$ makes $X\equiv\log(F_T/F_0)$ Gaussian. Integrating via It\^o's formula,
\begin{equation}
\begin{aligned}
X &=\int_0^T\frac{\mathrm{d}F_t}{F_t}-\frac12\int_0^T\Bigl(\frac{\mathrm{d}F_t}{F_t}\Bigr)^{\!2} \\[4pt]
  &=\int_0^T\sqrt{v_t}\,\mathrm{d}W_1^T(t)+\sigma_2\int_0^T B_r(t,T)\,\mathrm{d}W_3^T(t)
     -\frac12\int_0^T v_t\,\mathrm{d}t-\frac12\sigma_2^2 K_B-\rho_{13}\sigma_2\int_0^T B_r(t,T)\sqrt{v_t}\,\mathrm{d}t,
\end{aligned}
\end{equation}
where $K_B=\int_0^T B_r(t,T)^2\,\mathrm{d}t$. Decompose $W_1^T(t)=\rho_{12}W_2(t)+\sqrt{1-\rho_{12}^2}\,W^\perp(t)$ with $W^\perp$ independent of the variance driver $W_2$,
\begin{equation}
\begin{aligned}
X &= \rho_{12}\int_0^T\sqrt{v_t}\,\mathrm{d}W_2(t)+\sqrt{1-\rho_{12}^2}\int_0^T\sqrt{v_t}\,\mathrm{d}W^\perp(t) \\
  &\quad +\sigma_2\int_0^T B_r(t,T)\,\mathrm{d}W_3^T(t)
     -\frac12\int_0^T v_t\,\mathrm{d}t-\frac12\sigma_2^2 K_B-\rho_{13}\sigma_2\int_0^T B_r(t,T)\sqrt{v_t}\,\mathrm{d}t.
\end{aligned}
\end{equation}
Conditional on $\{v_t\}$, the $W^\perp$ and $W_3^T$ integrals are jointly Gaussian. Since $W_1^T=\rho_{12}W_2+\sqrt{1-\rho_{12}^2}\,W^\perp$ and $\mathrm{d}[W_1^T,W_3^T]_t=\rho_{13}\,\mathrm{d}t$, we have $\mathrm{d}[W^\perp,W_3^T]_t=\rho_{13}/\sqrt{1-\rho_{12}^2}\,\mathrm{d}t$, so their covariance is $\rho_{13}\sigma_2\int_0^T B_r(t,T)\sqrt{v_t}\,\mathrm{d}t$. The obstacle is the $\rho_{12}$-coupled term $\int_0^T\sqrt{v_t}\,\mathrm{d}W_2(t)$, driven by the same Brownian motion as the variance.

The linear gauge $g(v)=v/\sigma_1$ removes this term: $g'(v)\sigma_1\sqrt{v}=\sqrt{v}$ and $g''\equiv0$, so It\^o's formula gives
\begin{equation*}
\int_0^T\sqrt{v_s}\,\mathrm{d}W_2(s)=\frac{v_T-v_0}{\sigma_1}-\frac{\kappa}{\sigma_1}\int_0^T(\eta-v_s)\,\mathrm{d}s.
\end{equation*}
Substituting this identity yields the conditional law
\begin{equation}
X\mid\{v_t\}\;\sim\;\mathcal{N}\!\Bigl(\rho_{12}\frac{v_T-v_0}{\sigma_1}+\int_0^T f(v_s)\,\mathrm{d}s-\frac12\sigma_2^2 K_B,\;
(1-\rho_{12}^2)\int_0^T v_s\,\mathrm{d}s+\sigma_2^2 K_B+2\rho_{13}\sigma_2\int_0^T B_r\sqrt{v_s}\,\mathrm{d}s\Bigr),
\label{eq:hhw_condlaw}
\end{equation}
with running coefficient
\[
f(v)=-\frac{\rho_{12}\kappa(\eta-v)}{\sigma_1}-\frac12 v-\rho_{13}\sigma_2B_r\sqrt{v}.
\]

The gauge identity contributes the term $\rho_{12}(v_T-v_0)/\sigma_1$ to the mean of $X$. Taking the conditional characteristic function of~\eqref{eq:hhw_condlaw},
\[
\begin{aligned}
\mathbb{E}\bigl[\exp(\mathrm{i}uX)\mid\{v_t\}\bigr]
&= \exp\!\Bigl(\mathrm{i}u\rho_{12}\frac{v_T-v_0}{\sigma_1}\Bigr)
   \exp\!\Bigl(-\tfrac12(u^2+\mathrm{i}u)\sigma_2^2K_B\Bigr) \\
&\quad\times\exp\!\Bigl(\int_0^T\bigl[\mathrm{i}u f(v_s)-\tfrac12 u^2(1-\rho_{12}^2)v_s
   -u^2\rho_{13}\sigma_2B_r\sqrt{v_s}\bigr]\mathrm{d}s\Bigr).
\end{aligned}
\]
The second factor is $\exp(A_{\mathrm{HW}}(u;T))$, collecting the deterministic Hull--White contribution. Define
\[
h(t,v;u)=\mathbb{E}^{Q^T}\!\Bigl[\exp\!\Bigl(\mathrm{i}u{\frac{v_T}{\rho_{12}\sigma_1}}
+\int_t^T c_{\mathrm{HHW}}(s,v_s;u)\,\mathrm{d}s\Bigr)\Bigm| v_t=v\Bigr],
\]
with $c_{\mathrm{HHW}}$ as in~\eqref{eq:hhw_source}. By the Feynman--Kac theorem, $h$ satisfies the backward equation~\eqref{eq:hhw_pde_prop} with terminal condition $h(T,v;u)=\exp(\mathrm{i}u\rho_{12}v/\sigma_1)$. Multiplying $h(0,v_0;u)$ by $\exp(-\mathrm{i}u\rho_{12}v_0/\sigma_1)$ and $\exp(A_{\mathrm{HW}}(u;T))$ recovers $\varphi(u;T)$ of~\eqref{eq:hhw_cf}.}
\end{proof}
	
	\begin{algorithm}[H]
		\caption{\revten{HHW European option pricing via the CCF-PDE of Proposition~\ref{prop:hhw}.}}
		\begin{algorithmic}[1]
			\Require model parameters $(\kappa,\eta,\sigma_1,\rho_{12},\rho_{13},\sigma_2,a,b)$; market $(S_0,v_0,r_0)$; maturity $T$; strikes $\{K_j\}$
			\Ensure call and put prices $\{V_{c,j}\},\{V_{p,j}\}$
			
			\Statex \emph{Notation.} $f=2\kappa\eta/\sigma_1^2$ is the Feller ratio; $N$ is the number of Fourier modes with largest node $u_{\max}$; $M$ is the number of variance cells; $N_t$ is the number of time steps; $K_B=\int_0^T B_r(t,T)^2\,\mathrm{d}t$, with $B_r$ as in~\eqref{eq:hhw_bond}.
			
			\Statex \textbf{Phase 1 -- COS domain, Fourier modes, and variance grid.}
			\State compute $P(0,T)$ from~\eqref{eq:hhw_bond} and set $F_0 \gets S_0/P(0,T)$
			\State estimate the first two cumulants $\bar c_1$ and $\bar c_2$ of $\log(F_T/F_0)$ from central differences of $\log\varphi$ near $u=0$; \quad $s_x \gets \sqrt{\bar c_2}$
			\State $L_{\mathrm{eff}} \gets
			\begin{cases}
				10, & f\ge 1,\\
				\min\{10(2.5-f),\,16\}, & f<1
			\end{cases}$
			\State $\mathrm{half} \gets \max\{\,L_{\mathrm{eff}}s_x,\ 0.7,\ \max_j|\log(F_0/K_j)|+0.1\,\}$; \quad $[a_c,b_c] \gets [\,\bar c_1-\mathrm{half},\ \bar c_1+\mathrm{half}\,]$
			\State $N \gets \mathrm{clip}(24\,\mathrm{half}/s_x,\,128,\,1024)$, raised to $\mathrm{clip}(N(2.5-f),\,N,\,4096)$ when $f<1$; \quad $u_k \gets k\pi/(b_c-a_c)$, $k=0,\dots,N-1$
			\State $v_{\max} \gets \max\{\,\mathbb{E}[v_T]+6\sqrt{\mathrm{Var}[v_T]},\ 2\max(v_0,\eta)\,\}$ from the closed-form CIR moments
			\State construct a uniform variance grid $0=\xi_0<\dots<\xi_M=v_{\max}$, refined until $\Delta\xi \le (\pi/4)\,\sigma_1/(|\rho_{12}|\,u_{\max})$
			
			\Statex \textbf{Phase 2 -- One backward PDE solve shared across all Fourier modes and strikes.}
			\State set the terminal values $\Psi_k(\xi_m) \gets \exp(\mathrm{i}u_k\rho_{12}\xi_m/\sigma_1)$ for all $k,m$, as in~\eqref{eq:hhw_pde_prop}
			\State assemble the finite-difference discretization of $\mathcal{L}=\kappa(\eta-v)\partial_v+\tfrac12\sigma_1^2v\,\partial_{vv}$ by central differences, with a one-sided transport row at $v=0$ and the Neumann ghost condition $\partial_v\Psi=\mathrm{i}u\rho_{12}\Psi/\sigma_1$ at $v_{\max}$
			\State recenter $\operatorname{Im}c_{\mathrm{HHW}}$ of~\eqref{eq:hhw_source} along the mean path $m(t)=\eta+(v_0-\eta)e^{-\kappa t}$, absorbing its time integral into the prefactor while leaving $\operatorname{Re}c_{\mathrm{HHW}}$ unchanged
			\For{$n=1$ \textbf{to} $N_t$}
			\State advance all modes $\{\Psi_k\}_{k=0}^{N-1}$ one Crank--Nicolson step in $\mathcal{L}+\operatorname{diag}c_{\mathrm{HHW}}(\cdot\,;u_k)$ by a single Thomas sweep batched over $k$
			\EndFor
			\State recover the characteristic-function values $\varphi(u_k;T)$ from~\eqref{eq:hhw_cf}, with $\Psi_k(v_0)$ interpolated and the recentering integral restored
			
			\Statex \textbf{Phase 3 -- COS valuation with out-of-the-money pricing and put--call parity.}
			\State $V_{c,j}^{\mathrm{otm}} \gets$ COS call series on $[0,b_c]$, \quad $V_{p,j}^{\mathrm{otm}} \gets$ COS put series on $[a_c,0]$, both from $\{\varphi(u_k;T)\}$ and discounted by $P(0,T)$
			\State $V_{c,j} \gets
			\begin{cases}
				V_{c,j}^{\mathrm{otm}}, & K_j \ge F_0,\\
				V_{p,j}^{\mathrm{otm}}+P(0,T)(F_0-K_j), & K_j < F_0
			\end{cases}$
			\State $V_{p,j} \gets V_{c,j}-P(0,T)(F_0-K_j)$
			\State \textbf{return} $\{V_{c,j}\},\{V_{p,j}\}$
		\end{algorithmic}
	\end{algorithm}
	
	\section{G-SVJD: development and pricing algorithm}\label{app:gsvjd}
	
\revten{This appendix presents the derivation of the G-SVJD conditional characteristic-function PDE of Proposition~\ref{prop:gsvjd} and records the final iteration pricing algorithm.}

The G-SVJD dynamics are those of Section~\ref{sec:meth_gsvjd}. Define the following quantities:
\begin{enumerate}
  \item the log-normal jump characteristic function $\psi_J(u)=\exp(\mathrm{i}u\mu_y-\tfrac12u^2\sigma_y^2)$;
  \item the gauge function $g(v)=v^{3/2-p}/(\sigma_v(3/2-p))$, $p\neq\tfrac32$, satisfying $g'(v)\sigma_vv^p=\sqrt{v}$;
  \item the drift correction
  \begin{equation}
  F(v)=(r-q-\lambda m_J)-\frac{v}{2}-\frac{\rho\kappa(\bar\nu-v)v^{1/2-p}}{\sigma_v}
  -\frac{\rho}{2}\Bigl(\tfrac12-p\Bigr)\sigma_vv^{p-1/2};
  \label{eq:gsvjd_drift}
  \end{equation}
  \item the path functionals
  \[
  A_{T}=\rho[g(v_T)-g(v_0)]+\int_0^T F(v_s)\,\mathrm{d}s,\qquad
  B_T=\int_0^T v_s\,\mathrm{d}s.
  \]
\end{enumerate}

\rev{\begin{proposition}[G-SVJD CCF-PDE]\label{prop:gsvjd}
{The characteristic function of $\log(S_T/S_0)$ is given by
\begin{equation}
\varphi(u;T)=\exp(-\mathrm{i}u\rho g(v_0))\,h(0,v_0;u)\,\exp\!\bigl(\lambda T(\psi_J(u)-1)\bigr),
\label{eq:gsvjd_cf}
\end{equation}
where $h(t, v; u)$ solves the one-dimensional backward parabolic equation
\begin{equation}
-\partial_t h = \kappa(\bar\nu-v)\,\partial_v h + \tfrac12\sigma_v^2v^{2p}\,\partial_{vv}h + c(v;u)\,h,
\qquad c(v;u)=\mathrm{i}uF(v)-\tfrac12u^2(1-\rho^2)v,
\label{eq:gsvjd_pde}
\end{equation}
with terminal condition $h(T,v;u)=\exp(\mathrm{i}u\rho g(v))$ and $\mathrm{Re}\,c\le0$.}
\end{proposition}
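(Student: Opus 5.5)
The plan follows the HHW argument of Proposition~\ref{prop:hhw}: condition on the variance path, remove the $\rho$-coupled stochastic integral with the gauge $g$, read off a conditionally Gaussian law, and then use Feynman--Kac to turn the remaining path expectation into the backward equation~\eqref{eq:gsvjd_pde}.

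\emph{Step 1: decomposition.} Write $X=\log(S_T/S_0)$ using It\^o's formula for the jump-diffusion:
\[
X=\int_0^T\bigl(r-q-\lambda m_J-\tfrac12 v_s\bigr)\mathrm{d}s+\int_0^T\sqrt{v_s}\,\mathrm{d}W_s+\sum_{j\le N_T}\log J_j .
\]
The compound Poisson sum is independent of $(W,Z)$. Its characteristic function is $\exp(\lambda T(\psi_J(u)-1))$, which gives the last factor in~\eqref{eq:gsvjd_cf}. Next decompose $W=\rho Z+\sqrt{1-\rho^2}\,W^\perp$, with $W^\perp$ independent of $Z$ and hence of the variance path.

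\emph{Step 2: the gauge identity (the key computation).} Apply It\^o's formula to $g(v_t)$. We have $g'(v)=v^{1/2-p}/\sigma_v$, so $g'(v)\sigma_v v^p=\sqrt v$, and $g''(v)=(\tfrac12-p)v^{-1/2-p}/\sigma_v$. Then
\[
\mathrm{d}g(v_t)=\Bigl[\frac{\kappa(\bar\nu-v_t)v_t^{1/2-p}}{\sigma_v}+\frac12\Bigl(\tfrac12-p\Bigr)\sigma_v v_t^{p-1/2}\Bigr]\mathrm{d}t+\sqrt{v_t}\,\mathrm{d}Z_t .
\]
Solving this for $\int_0^T\sqrt{v_s}\,\mathrm{d}Z_s$ and substituting into $X$ gives
\[
X=A_T+\sqrt{1-\rho^2}\int_0^T\sqrt{v_s}\,\mathrm{d}W^\perp_s+\text{jumps},
\]
where the drift collects exactly into $F$ of~\eqref{eq:gsvjd_drift}. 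Conditional on $\{v_t\}$, the diffusive part is therefore $\mathcal N\bigl(A_T,(1-\rho^2)B_T\bigr)$. Hence
\[
\mathbb E\bigl[e^{\mathrm{i}uX}\mid\{v_t\}\bigr]=e^{\lambda T(\psi_J(u)-1)}\,e^{-\mathrm{i}u\rho g(v_0)}\,\exp\Bigl(\mathrm{i}u\rho g(v_T)+\int_0^T c(v_s;u)\,\mathrm{d}s\Bigr),
\]
with $c$ as in~\eqref{eq:gsvjd_pde}.

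\emph{Step 3: Feynman--Kac and sign of $\mathrm{Re}\,c$.} Take the outer expectation and set
\[
h(t,v;u)=\mathbb E\Bigl[\exp\Bigl(\mathrm{i}u\rho g(v_T)+\int_t^T c(v_s;u)\,\mathrm{d}s\Bigr)\Bigm| v_t=v\Bigr].
\]
The Feynman--Kac theorem, with generator $\kappa(\bar\nu-v)\partial_v+\tfrac12\sigma_v^2v^{2p}\partial_{vv}$ and potential $c$, gives~\eqref{eq:gsvjd_pde} with terminal value $\exp(\mathrm{i}u\rho g(v))$. Evaluating at $(0,v_0)$ yields~\eqref{eq:gsvjd_cf}.

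The sign claim is immediate. Since $g$ and $F$ are real, $\mathrm{Re}\,c=-\tfrac12u^2(1-\rho^2)v\le0$ for $v\ge0$ and $|\rho|\le1$. Consequently the integrand defining $h$ has modulus at most one, and $h$ is well defined.

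\emph{Main obstacle.} I expect the difficulty to be technical rather than algebraic, and it sits near $v=0$. For $p>\tfrac12$ the drift correction contains $v^{1/2-p}$, which is singular at the origin. For $p>\tfrac32$ even $g$ itself blows up there, and $p=\tfrac32$ is excluded outright.

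To handle this, I would first restrict to parameters for which the variance process does not reach zero, or for which $\int_0^T|F(v_s)|\,\mathrm{d}s<\infty$ almost surely. With that in place, I would apply It\^o's formula to $g$ up to the stopping times $\tau_\epsilon=\inf\{t:v_t\le\epsilon\}$ and let $\epsilon\to0$. I would then invoke a Feynman--Kac result for degenerate or singular coefficients, either in the classical sense on $(0,\infty)$ or in the viscosity sense. The bound $|h|\le1$ supplies the dominated-convergence control needed for both limits.
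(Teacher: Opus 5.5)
Your proposal is correct and follows the paper's proof essentially step for step. You factor out the independent compound-Poisson term, split $W=\rho Z+\sqrt{1-\rho^2}\,W^\perp$, use It\^o's formula on the gauge $g$ to eliminate $\int_0^T\sqrt{v_s}\,\mathrm{d}Z_s$ and collect the drift into $F$, read off the conditional law $\mathcal N\bigl(A_T,(1-\rho^2)B_T\bigr)$, and apply Feynman--Kac to the outer expectation. Your localization discussion near $v=0$ goes beyond the paper, which invokes It\^o and Feynman--Kac without comment; one correction is that for $p<\tfrac12$ the singular term in $F$ is $v^{p-1/2}$, not $v^{1/2-p}$, so your general integrability hypothesis on $F$ is the right one to state.
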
}
	
\begin{proof}
\revten{
Let $X\equiv\log(S_T/S_0)$ be the log-return. The jump contribution factors as $\exp(\lambda T(\psi_J(u)-1))$, so it suffices to treat the diffusive part of $X$. Write $W=\rho Z+\sqrt{1-\rho^2}\,W^\perp$ with $W^\perp$ independent of $Z$. Integrating the log-return SDE,
\[
X=(r-q-\lambda m_J)T-\frac12\int_0^T v_s\,\mathrm{d}s
+\rho\int_0^T\sqrt{v_s}\,\mathrm{d}Z_s+\sqrt{1-\rho^2}\int_0^T\sqrt{v_s}\,\mathrm{d}W_s^\perp .
\]
The $W^\perp$ integral is, conditional on $\{v_s\}_{0\le s\le T}$, centered Gaussian with variance $(1-\rho^2)\int_0^T v_s\,\mathrm{d}s$. The obstacle is $\int_0^T\sqrt{v_s}\,\mathrm{d}Z_s$, driven by the same Brownian motion as the variance.

Applying It\^o's formula to $g(v_t)$ along $\mathrm{d}v_t=\kappa(\bar\nu-v_t)\,\mathrm{d}t+\sigma_vv_t^p\,\mathrm{d}Z_t$,
\[
\mathrm{d}g(v_t)=\bigl[g'(v_t)\kappa(\bar\nu-v_t)+\tfrac12g''(v_t)\sigma_v^2v_t^{2p}\bigr]\mathrm{d}t+g'(v_t)\sigma_vv_t^p\,\mathrm{d}Z_t.
\]
Since $g'(v_t)\sigma_vv_t^p=\sqrt{v_t}$, the stochastic term is $\sqrt{v_t}\,\mathrm{d}Z_t$. Integrating,
\[
\int_0^T\sqrt{v_s}\,\mathrm{d}Z_s=g(v_T)-g(v_0)-\int_0^T\bigl[g'(v_s)\kappa(\bar\nu-v_s)+\tfrac12g''(v_s)\sigma_v^2v_s^{2p}\bigr]\mathrm{d}s.
\]

Substituting into $X$ and collecting the deterministic drift into $F(v)$ yields
\[
X=\rho[g(v_T)-g(v_0)]+\int_0^T F(v_s)\,\mathrm{d}s+\sqrt{1-\rho^2}\int_0^T\sqrt{v_s}\,\mathrm{d}W_s^\perp
=A_T+\sqrt{1-\rho^2}\int_0^T\sqrt{v_s}\,\mathrm{d}W_s^\perp .
\]
Hence, conditional on $\{v_s\}$,
$
X\mid\{v_s\}\;\sim\;\mathcal{N}\!\bigl(A_T,\;(1-\rho^2)B_T\bigr).
$

The conditional characteristic function is thus given by
\[
\mathbb{E}\bigl[\exp(\mathrm{i}uX)\mid\{v_s\}\bigr]=\exp\!\bigl(\mathrm{i}uA_T-\tfrac12u^2(1-\rho^2)B_T\bigr).
\]
Taking the outer expectation over the variance path defines $h$ as
{$$h(t, v; u):= \mathbb{E}\!\Bigl[\exp{\!\Bigl(\mathrm{i}u\rho g(v_T) + \int_t^T c(v_s;u)\Bigr)\,\mathrm{d}s}\mid v_t = v\Bigr].$$}
By the Feynman--Kac theorem, $h$ satisfies the backward equation~\eqref{eq:gsvjd_pde} with terminal condition $h(T,v;u)=\exp(\mathrm{i}u\rho g(v))$ and $\mathrm{Re}\,c=-\tfrac12u^2(1-\rho^2)v\le0$. {Multiplying by the prefactor $\exp(-\mathrm{i}u\rho g(v_0))$ left from} $A_T$ and restoring the jump factor gives $\varphi(u;T)$ of~\eqref{eq:gsvjd_cf}.}
\end{proof}
	
	Two further {numerical techniques} are developed for the variance process. First, the diffusion coefficient $\sigma_v^2v^{2p}$ in~\eqref{eq:gsvjd_pde} degenerates as $v\downarrow0$. The PDE is solved in the Lamperti coordinate
		\[
		Y=\frac{v^{1-p}}{\sigma_v(1-p)},
		\]
		which transforms the state-dependent diffusion coefficient into a constant one and concentrates grid resolution near the origin, where the solution varies most rapidly. Second, the drift correction $F$ of~\eqref{eq:gsvjd_drift} contains the factor $v^{1/2-p}$, which becomes singular at $v=0$ when $p>\tfrac12$. This singularity is handled by replacing $F$ at the first interior node by its average over the adjacent cell, while leaving the interior discretization unchanged.
	\begin{algorithm}[H]
		\caption{\revten{G-SVJD European option pricing via the CCF-PDE of Proposition~\ref{prop:gsvjd}.}}
		\begin{algorithmic}[1]
			\Require model parameters $(\kappa,\bar\nu,\sigma_v,p,\rho,\lambda,\mu_y,\sigma_y)$; market $(S_0,v_0,r,q)$; maturity $T$; strikes $\{K_j\}$
			\Ensure call and put prices $\{V_{c,j}\},\{V_{p,j}\}$
			
			\Statex \emph{Notation.} $L$ is the COS truncation multiple; $N$ is the number of Fourier modes; $M$ is the number of variance cells; $N^{\mathrm{yr}}$ is the number of time steps per unit maturity and $N_t$ the total number of time steps; $m_J=\exp(\mu_y+\sigma_y^2/2)-1$ is the mean jump size; $g$, $F$, and $\psi_J$ are as defined above.
			
			\Statex \textbf{Phase 1 -- COS domain and Fourier modes.}
			\State $\bar c_1 \gets (r-q-\lambda m_J)T+\lambda T\mu_y-\tfrac12\bar\nu T$; \quad $\bar c_2 \gets \bar\nu T+\lambda T(\mu_y^2+\sigma_y^2)$
			\State $w \gets L\sqrt{\bar c_2}+0.15\sqrt{T}$; \quad $[a,b]\gets[\,\bar c_1-w,\ \bar c_1+w\,]$, enlarged if necessary so that every $\log(K_j/S_0)$ lies inside the interval with margin
			\State $u_k \gets k\pi/(b-a)$ for $k=0,\dots,N-1$
			
			\Statex \textbf{Phase 2 -- One backward PDE solve shared across all Fourier modes and strikes.}
			\State construct a Lamperti grid, uniform in $Y=v^{1-p}/[\sigma_v(1-p)]$ and mapped to variance nodes $0=\xi_0<\dots<\xi_M=v_{\max}$, with $v_{\max}$ chosen from the variance-process moments; this clusters nodes near $v=0$, where $\sigma_vv^p$ degenerates
			\State set $F$ from~\eqref{eq:gsvjd_drift}, replacing its value at the first interior node by the cell average over $[0,\xi_{1/2}]$ to regularize the $v^{1/2-p}$ singularity when $p>\tfrac12$
			\State set terminal data $h_k(\xi_m)\gets \exp(\mathrm{i}u_k\rho\,g(\xi_m))$ for all $k,m$, as in~\eqref{eq:gsvjd_pde}
			\State assemble the finite-difference discretization of $\mathcal{L}=\kappa(\bar\nu-v)\partial_v+\tfrac12\sigma_v^2v^{2p}\partial_{vv}$ by non-uniform central differences with constant-preserving boundaries, and add the source $c(\cdot\,;u)$ of~\eqref{eq:gsvjd_pde}
			\State $N_t\gets\max\{\lceil N^{\mathrm{yr}}T\rceil,\,24\}$
			\For{$n=1$ \textbf{to} $N_t$}
			\State advance all modes $\{h_k\}_{k=0}^{N-1}$ one step in $\mathcal{L}+\operatorname{diag}c(\cdot\,;u_k)$ by a Thomas sweep batched over $k$, using implicit Euler (Rannacher) for the first two steps and Crank--Nicolson thereafter
			\EndFor
			\State recover the characteristic-function values $\varphi(u_k;T)$ from~\eqref{eq:gsvjd_cf}, with $h_k(v_0)$ interpolated
			
			\Statex \textbf{Phase 3 -- COS valuation with put--call parity.}
			\State for each strike $K_j$, compute $V_{c,j}$ from the COS call series, integrating the payoff from $\log(K_j/S_0)$ to $b$ using $\{\varphi(u_k;T)\}$
			\State $V_{p,j}\gets V_{c,j}-S_0\exp(-qT)+K_j\exp(-rT)$
			\State \textbf{return} $\{V_{c,j}\},\{V_{p,j}\}$
		\end{algorithmic}
	\end{algorithm}
	
	
	\FloatBarrier
	\section{Stress-test design and corner cases}\label{app:stress_design}
	
	This appendix complements \S\ref{sec:iteration} by detailing the development of the stress-test parameter sets and the placement of the corner cases.
	
	Let $\boldsymbol{p}=(p_1,\ldots,p_n)$ denote the free parameters exposed by the implementation interface, and let $\boldsymbol{p}^{(0)}$ denote the blueprint default. The two observables are the at-the-money implied volatility $\sigma_{\mathrm{IV}}(T)$ and the at-the-money skew
	\[
	\mathrm{Sk}(T)
	=
	\left.
	\frac{\partial \sigma_{\mathrm{IV}}}{\partial \log K}
	\right|_{K=F},
	\]
	each evaluated at the reference maturities $T_1=0.1$ and $T_2=1.0$. Writing $\mathcal{O}^{(\ell)}(T_m;\boldsymbol{p})$ for the $\ell$-th observable, the sensitivity
	\[
	\frac{\partial \mathcal{O}^{(\ell)}}{\partial p_i}
	\]
	at $\boldsymbol{p}^{(0)}$ is computed numerically. Four target levels are prescribed,
	$
	\bar{\mathcal{O}}^{(1)}\in\{0.50,\,0.10\},
	\;
	\bar{\mathcal{O}}^{(2)}\in\{-2.0,\,0.0\},
	$
	for $\sigma_{\mathrm{IV}}$ and $\mathrm{Sk}$, respectively. For each parameter, maturity, and target, the candidate displacement is
	\[
	\Delta p_i^{(\ell,m,\bar{\mathcal{O}})}
	=
	\frac{
		\bar{\mathcal{O}}^{(\ell)}
		-
		\mathcal{O}^{(\ell)}(T_m;\boldsymbol{p}^{(0)})
	}{
		\left.
		\partial \mathcal{O}^{(\ell)}(T_m;\boldsymbol{p})/\partial p_i
		\right|_{\boldsymbol{p}^{(0)}}
	}.
	\]
	
	Across all combinations, this procedure determines, for each parameter $p_i$, a reachable extreme in each direction, namely the smallest positive and the smallest negative displacement, both clamped to the admissible parameter range prescribed by the blueprint. The computed displacement determines the direction and order of magnitude of the stress, while the value ultimately used is rounded to a representative figure within that range so that the stress parameters appear as round numbers.
	
	Each iteration stresses at most five parameters, selected as the most representative of the model structure, retaining both extremes of each parameter and  generating at most ten stress configurations. Up to two documented abnormal regimes, such as violations of the Feller condition~\cite{feller1951} in Heston-type models, are then appended. Including the base configuration, an iteration evaluates the base case, up to ten stress configurations, and up to two abnormal regimes.
	
	The corner cases consist of four isolated points placed outside the standard strike--maturity window, each designed to isolate a single extreme rather than to form a maturity-by-strike corner. They comprise the two extreme maturities,
$T=0.001
	\; \text{and}\;
	T=5.0,$
	evaluated at the money, where the price remains well defined while the time discretization and cosine truncation are under the greatest strain, together with the deep-strike pair
	$
	(T,\log(K/F))
	=
	(0.5,\pm0.4),
	$
	evaluated at a representative maturity and lying outside the standard moneyness grid.

\end{document}